\crefname{theorem}{Theorem}{Theorems}
\crefname{lemma}{Lemma}{Lemmas}
\crefname{equation}{}{}
\def\cT{{\mathcal T}}
\def\blue{\color{blue}}
\def\red{\color{red}}
\def\violet{\color{violet}}
\newcommand{\brac}[1]{\left( #1 \right)}
\newcommand\bfrac[2]{\left(\frac{#1}{#2}\right)}
\def\E{{\bf E}}
\def\Pr{\mathbb{P}}
\def\rai{\rightarrow \infty}
\newcommand{\ind}[1]{\,\mathbbm{1}_{\{{#1}\}}}
\newcommand{\ignore}[1]{ }
\def\d{\delta}
\def\eps{\epsilon}
\def\Th{\Theta}
\def\l{\lambda}
\def\om{\omega}
\def\Om{\Omega}
\def\etal{{\it{et al.}}}
\newcommand{\ra}{\rightarrow}
\newcommand{\imp}{\implies}
\def\sm{\! \setminus \!}
\newcounter{rot}
\newcommand{\floor}[1]{\lfloor #1 \rfloor}
\newcommand{\ceil}[1]{\lceil #1 \rceil}
\newcommand{\TheoremExpander}
\newtheorem{theorem}{Theorem}
\newtheorem{lemma}[theorem]{Lemma}
\newtheorem{corollary}[theorem]{Corollary}
\newtheorem{Remark}{Remark}
\newlength{\wdth}
\begin{document}

%
\title{Discrete Incremental  Voting on Expanders}

\author{Colin Cooper\thanks{
Department of  Informatics,
King's College London, London WC2R 2LS, UK.
Research supported  at the University of Hamburg, by a  Mercator fellowship from DFG  Project 491453517}
\and Tomasz Radzik\thanks{Department of  Informatics,
King's College London, London WC2R 2LS, UK.}
\and Takeharu Shiraga\thanks{
Department of Information and System Engineering, Chuo University, Tokyo, Japan.
Research supported by JSPS KAKENHI Grant Number 19K20214.}
}

\maketitle \makeatother

\begin{abstract}
Pull voting is a random process in which vertices of a connected graph have initial opinions chosen from a set of $k$ distinct opinions, and at each step a random vertex alters its opinion to that of a randomly chosen neighbour.
If the system reaches a state where each vertex holds the same opinion, then this opinion will persist forthwith.

In general the opinions  are regarded as incommensurate,
whereas in 
this paper 
we consider a type of pull voting suitable for integer opinions such as  $\{1,2,\ldots,k\}$ which can be compared on a linear scale; for example,
1 ('disagree strongly'), 2 ('disagree'), $\ldots,$ 5 ('agree strongly'). On observing the opinion of a random neighbour,  a vertex updates its opinion by a discrete change towards the value of the neighbour's opinion, if different. 

Discrete incremental voting is a pull voting process which mimics this behaviour.
At each step a random vertex alters its opinion towards that of a randomly chosen neighbour;
increasing  its opinion by $+1$ if the opinion of the chosen neighbour is larger, or decreasing its opinion by $-1$, if the opinion of the neighbour is smaller. 
If initially there are only two adjacent integer opinions, for example $\{0,1\}$, incremental voting coincides with  pull voting, but if initially there are more than two opinions
this is not the case.

For an $n$-vertex graph $G=(V,E)$, let $\l$ be the absolute second eigenvalue of the transition matrix $P$ of a simple random walk on $G$.
Let the  initial opinions of the vertices be chosen from  $\{1,2,\ldots,k\}$.
Let $c=\sum_{v \in V} \pi_v X_v$, where $X_v$ is the initial opinion of vertex $v$, and $\pi_v$ is the stationary distribution of the vertex.
Then provided $\l k=o(1)$ and $k=o(n/\log n)$, with high probability (whp)
the final opinion is the initial weighted average  $c$ suitably rounded to 
$\floor{c}$ or $\ceil{c}$.

If $G$ is a regular graph then $c$ is  the average opinion, and
with high probability the final opinion held by all vertices is $\floor{c}$ or $\ceil{c}$.
\end{abstract}

\maketitle

\section{Introduction}

\paragraph{Background on distributed pull voting.}
Distributed voting has  applications in various
fields of computer science including consensus and leader election in large networks
\cite{BMPS04,HassinPeleg-InfComp2001}.
Initially, each vertex has some value chosen from a set $S$, and the aim is
that the vertices reach consensus on
(converge to) the same value, which
should, in some sense, reflect the
initial distribution of the values.
Voting algorithms are usually simple,
fault-tolerant, and easy to implement \cite{HassinPeleg-InfComp2001,Joh89}.

Pull voting is a  simple form of distributed voting.
At each step, in the asynchronous process a randomly chosen vertex 
replaces its opinion with that of  randomly chosen neighbour.
The probability a particular opinion, say opinion $A$, wins
is $d(A)/2m$, where $d(A)$ is the sum of the degrees of the vertices initially holding opinion $A$, and
$m$ is the number of edges in the graph; see Hassin and Peleg~\cite{HassinPeleg-InfComp2001}
and Nakata~\etal~\cite{Nakata_etal_1999}.

The pull voting process has been modified to consider two or more opinions at each step. 
The aim of this modification is twofold; to ensure the majority (or plurality) wins, and to speed up the run time of the process.
Work on {\em best-of-$k$} models, where
a vertex replaces its opinion with the opinion most represented in a sample of 
$k$ neighbouring vertices,
includes
\cite{AD,Becchetti, Bec2, Bec3, Petra,CER,CRRS,Ghaffari,NanNico,NS}.

\paragraph{Discrete incremental voting: An introduction.}
We assume the initial opinions of the vertices are chosen from among the integers
$\{1,2,...,k\}$.
As a simple example, suppose the entries   reflect the views of the vertices about some issue, and range from  1 ('disagree strongly') to $k$ ('agree strongly'). It seems unrealistic that a vertex would completely change its opinion to that of a neighbour (as in pull voting) based only on observing what the neighbour thinks.

However, people being what they are, 
it seems possible that 
on observing what a neighbour thinks, 
they may modify their own opinion slightly towards the neighbour's opinion.
In the simplest case,  suppose that the selected vertex $v$ has opinion $i$ and observes a
neighbour $w$ with opinion~$j\neq i$.
Vertex $v$ then changes its opinion by $1$ towards the opinion held by vertex $w$.
That is, if $j>i$, then
vertex $v$ modifies its opinion to $i+1$. Similarly, if the observed neighbour $w$ has opinion $j<i$,
vertex $v$ modifies its opinion to $i-1$.
The neighbour $w$ does not change its opinion
at this interaction.
The convergence aspects of this process, 
including the value the process converges to and the time of convergence, 
are the topics of this paper.

\paragraph{Asynchronous  discrete incremental voting: Definition of process.}
Let
$G = (V,E)$ be a connected graph with $n$ vertices
and $m$ edges, and
let $X=(X_v: v\in V)$ be a vector of integer opinions.
At a given step, a vertex $v$ and a neighbour $w$ of $v$ are chosen according to some rule.
The value  $X_v$ of the chosen vertex is updated to $X'_v$ as follows, 
\begin{equation}\label{Tab}
\left.
\begin{array}{lll}
X_v <X_w & \imp & X'_v=X_v+1\\
X_v=X_w & \imp & X'_v=X_v\\
X_v > X_w & \imp & X'_v=X_v-1
\end{array}
\; \right. 
\end{equation}
For all other vertices $u \ne v$ (including vertex $w$),  $X'_u=X_u$.

We  consider two  asynchronous  processes which differ in the way the vertex $v$ and its neighbour $w$ are chosen;
an event which we call ``$v$ chooses $w$.''

{\sc Asynchronous vertex process}: \; Choose a random neighbour of a random vertex.
A~vertex $v \in V$ is chosen with probability (w.p.) $1/n$ and a neighbour $w$ of $v$ is chosen w.p.\ $1/d(v)$,
where $d(v)$ is the degree of vertex $v$.
Thus 
\begin{equation}\label{avtrans}
\Pr(v \text{ chooses }w)= \frac 1n  \frac{\ind{(v,w) \in E(G)}}{d(v)},
\end{equation}

{\sc Asynchronous edge process}: \; Choose a random endpoint of a random edge.
Edge $e \in E(G)$ is chosen w.p.\ $1/m$, 
and one of its endpoints is chosen w.p.\ $1/2$ as vertex $v$, while the other
endpoint is the selected neighbour vertex $w$.
Thus 
\[
\Pr(v \text{ chooses  }w)= \frac 1{2m}{\ind{(v,w) \in E(G)}}.
\]
The edge process can be seen as a vertex process where  $v$ is sampled with probability $\pi_v=d(v)/2m$ rather than uniformly at random.

In order to reach a consensus opinion in discrete incremental voting, all other opinions must be eliminated. 
The only way to irreversibly 
eliminate an opinion,
is to remove one of the two extreme opinions in the order.
This being repeated until a single opinion remains.
Returning to our original example, 1 ('disagree strongly'), 2 ('disagree'), 3 ('indifferent'), 4 ('agree'), 5 ('agree strongly'),
suppose we start with each vertex having one of the opinions in $\{1,2,5\}$.  
Then a possible evolution of the system 
(that is, evolution of the set of opinions present in the system) is
\[
\{1,2,5\}\! \to\! \{1,2,4\}\! \to\! \{1,2,3,4\}\! \to\! \{2,3,4\}\! \to\! \{2,4\}\! \to\! \{2,3\}\! \to\! \{3\},
\]
where
{the set of opinions at the beginning of each of these stages is indicated, and}
'$\ra$' represents a sequence of one or more steps constituting a stage.
Intermediate values may disappear 
and then appear again; in the above example,
opinion $3$ disappears in stage four and  appears again in stage five.
Eventually, as extreme values disappear, 
we reach a stage  when  only two adjacent values  remain.
In the example above, there are only opinions $2$ and $3$ during the last stage. 
At this point the process reverts to ordinary two-opinion pull voting, and finally 
one of these two opinions wins (in this example, opinion 3 wins).

On a connected graph, discrete incremental voting is a Markov process in which the set of possible opinions  decreases when extreme values
are removed. The final singleton states are absorbing, all other states being transient. 
In the example above, the process reaches the absorbing state $\{3\}$.
The extreme values in order of removal are $5,1,4,2$.

\paragraph{The main features of discrete incremental voting.}

The general model of \emph{pull voting} on connected graphs regards the opinions as incommensurate, and thus not comparable on a numeric scale. 
As mentioned earlier, 
the final opinion is chosen with probability proportional to 
the sum of the degrees of the vertices initially holding this opinion. 
Thus, for regular graphs, the
most likely value is \emph{the mode} of the initial opinions.
In contrast to this, \emph{median voting}, introduced and analysed in Doerr~\etal~\cite{Doerr}, 
considers opinions drawn from an ordered set, 
and aims to converge to \emph{the median} value of the initial opinions. 
At each step a random vertex selects two neighbours and replaces its opinion by the median of all three values (including its own current value).
On the complete graph, if $l$ denotes the index of the final consensus value 
(that is, the process converges to 
the $l$-th smallest of the
initial values), then w.h.p.\ $l\in [n/2 - O(\sqrt{n \log n}), n/2 + O(\sqrt{n \log n})]$.

The variant of pull voting considered in this paper,
 {\em discrete incremental voting} (DIV), 
regards opinions as integers in the range $\{1,\ldots,k\}$ and converges, under suitable conditions, 
to \emph{the average} opinion of the group.
Seen in this context,  pull voting, median voting
and our discrete incremental voting,  mirror (respectively) the statistical measures of Mode, Median and Mean.

A concrete  application of discrete incremental voting is to find the integer average of  integer  weights held at the vertices of  
 a network. The DIV process does this using only the pull voting paradigm, a very weak type of interaction when 
only one of the two interacting vertices updates its state.
Suppose
that the initial  average\footnote{The type of average returned depends on the algorithm. The edge process returns a simple average while
the vertex process returns a degree weighted average.} 
of the weights is $c$. 
We prove that for many classes of expanders, with high probability\footnote{
With high probability (w.h.p.): with probability $1-o_n(1)$.} the final unique opinion in the incremental voting process is $c$, 
if $c$ is integer, and either $\floor{c}$ or $\ceil{c}$ otherwise.
The accuracy of the outcome is perhaps surprising, as pull voting achieves the mode only with positive probability, and 
median voting with $k$ distinct opinions can err from the true median by up to $k/\sqrt n$ values. 

Typical  asynchronous averaging algorithms operating with integer values are inspired by load balancing. 
The simplest example of this is in population protocol models, where a random edge is chosen and
its endpoints update
their values (loads)
to the round up and round down of the half of the total load over the edge.
If the loads at the vertices are $a$ and $b$, then the new loads are $\floor{(a+b)/2},\, \ceil{(a+b)/2}$.
 Unless the initial average $c$ is integer,
this process can lead at best to  a mixture of $\floor{c}$ and $\ceil{c}$ values at the vertices.
See~\cite{Petra2} for details, where
it was shown  that with high probability
this load balancing process reaches a state consisting of three consecutive values around the initial average within $O(n \log n + n\log k)$ steps, for the initial loads in $\{1,\ldots,k\}$.

The approach of averaging along an edge requires a simultaneous coordinated update of the edge endpoints.
Incremental voting is simpler;  in each step only one vertex changes its state. 
On many expander graphs, all  opinions are quickly replaced by the 
the initial integer averages $\floor{c}$ and $\ceil{c}$. 
The process then enters the final stage, which corresponds to classic two-value pull voting
which  is equivalent to probabilistic rounding. 
On completion, all vertices hold the same integer  opinion,  e.g.,  $\floor{c}$. 

Thus, although DIV does not conserve the total initial weight of the opinions,  it   
converges   w.h.p.\ to the  initial integer average in the edge process (or to the degree weighted average in the vertex process). 
The advantage of DIV 
comes from the extreme opinions contracting rapidly towards the initial average. This avoids the need for 
more complex interactions such as load balancing across edges.

 \paragraph{Two-opinion pull voting: The last stage in  incremental voting.}
In randomized pull voting, a vertex updates its value at a given step by choosing (pulling) the value of a 
neighbour chosen uniformly at random. 
Two-opinion pull voting is a special case of pull voting where initially there are only two opinions, usually written $\{0,1\}$.

The simplest  case in which incremental voting differs from pull voting is where the  opinions are chosen from  three adjacent integers, say $\{0,1,2\}$ or $\{1,2,3\}$.
In general we assume the initial values are in the range  $\{1, ..., k\}$.
In order for discrete incremental voting to ever finish, it must
reach a stage where only two adjacent opinion  values $\{i,i+1\}$ remain; at which point it reverts
to two-opinion pull voting.

For future reference, we  note the probability of winning in two-opinion pull voting starting with values $\{0,1\}$.
Let $A_j$ be the set of vertices with value  $j \in\{0,1\}$, and
$N_j=|A_j|$, where $N_0 + N_{1}=n$. Let $d(A)=\sum_{v \in A} d(v)$ be the total degree of set $A$.  The probability
that $i$ wins is
\begin{equation} \label{probw}
\Pr(i \text{ wins})=\frac{N_i}{n} \quad (\text{Edge process}), \qquad
{\Pr(i \text{ wins})=\frac{d(A_i)}{2m} \quad(\text{Vertex process}).}
\end{equation}

\paragraph{Discrete incremental voting. Results.}

This section gives the main result.
Everything depends on proving Theorem \ref{thm:expander_gen}, which shows that on expanders the range of opinions is 
reduced relatively quickly from $k$ to two adjacent ones. The  outcome of the final stage of two-opinion voting, 
Theorem \ref{thm:expander1}, then follows from Lemma \ref{ThA}.

Let $G=(V,E)$ be a connected graph with $n$ vertices and $m$ edges. Let $P$ be the transition matrix of a simple random walk on $G$,
defined by
$P(v,u)=\mathbbm{1}_{\{v,u\}\in E}/d(v)$ for $v,u\in V$. 
Assume $P$ is aperiodic, so that  the stationary distribution of vertex $v$ is $\pi_v=d(v)/2m$.
Let $\pi_{\min}=\min_{v\in V}\pi_v$,
and let $\lambda$ be the second largest eigenvalue in absolute value of the transition matrix $P$.
The set $\{1,2,...,k\}$ is denoted by $[k]$. With high probability (w.h.p.) means with probability $1-o_n(1)$.

\renewcommand{\TheoremExpander}{%
{\sc Reduction to two adjacent opinions.}\\
Consider asynchronous  incremental voting on  $G$ with opinions from $[k]$.
Suppose
$\lambda k =o(1)$, $k=o(n/\log n)$
and $\pi_{\min}=\Theta(1/n)$.
Then,  in the vertex process within $T=o(n^2)$
steps with high probability~only two consecutive opinions $i$ and $i+1$ remain.
}

\begin{theorem}
\label{Expandy}\label{thm:expander_gen}
\TheoremExpander
\end{theorem}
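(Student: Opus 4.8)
The plan is to track the extreme opinions and show that the largest one (say $k$, or more precisely $M=\max$) shrinks quickly, and symmetrically the smallest one grows, so that within $o(n^2)$ steps the window of surviving opinions narrows to two adjacent values. The natural potential to watch is a weighted count of vertices holding extreme opinions, or better, the weighted sum $S=\sum_v \pi_v X_v$ together with a concentration-type argument about how far individual vertices can deviate from the running average $c$. The key structural fact is that, because opinions only change by $\pm 1$ and only move \emph{towards} a neighbour, a vertex currently holding the maximum opinion $M$ can only decrease (it has no neighbour with a larger opinion), so the maximum is non-increasing; dually the minimum is non-decreasing. So the real content is a \emph{rate} estimate: the set $A_M$ of vertices at the current maximum loses vertices fast.

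**First I would** set up the following one-level claim: if the current maximum opinion is $M$ and $|A_M|=N$, then (ignoring rare bad events) within $O(n\log n + n\log(1/\pi_{\min})\,)$ further steps either $A_M$ becomes empty (so $M$ drops) or $N$ has at least halved. The mechanism: a vertex $v\in A_M$ that chooses a neighbour not in $A_M$ strictly decreases, and it can only re-enter $A_M$ if $M$ itself has not dropped and $v$ later chooses a neighbour at value $M$ — but since $M$ is non-increasing, once $v$ leaves it can only come back from value $M-1$, which requires the neighbour it pulls to be at $M$. Using the expander condition, most vertices in $A_M$ have a constant fraction of neighbours outside $A_M$ once $N$ is, say, below $n/2$; more carefully one uses that $\lambda$ small forces edge expansion, so the number of $A_M$-to-outside edges is $\Omega(N)$ (when $N\le n/2$; when $N>n/2$ argue about the complement growing). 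Each such edge, when its $A_M$-endpoint is selected (probability $\Theta(1/n)$ per step in the vertex process, since $\pi_{\min}=\Theta(1/n)$ means degrees are $\Theta(\bar d)$), pushes a vertex out. Summing, the drift of $N$ downward is $\Omega(N/n)$ per step, so $N$ halves in $O(n)$ steps and $A_M$ is emptied in $O(n\log n)$ steps, by a standard supermartingale/coupling argument with the re-entry probability controlled.

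**The role of $\lambda k=o(1)$ and $k=o(n/\log n)$** is then to let us sum over the at most $k$ levels: each drop of $M$ costs $O(n\log n)$ steps, there are fewer than $k$ such drops (and symmetrically $<k$ rises of the minimum), for a total of $O(kn\log n)=o(n^2)$ steps, using $k=o(n/\log n)$. The condition $\lambda k=o(1)$ is what guarantees the expansion estimate is uniform enough across all $k$ stages and that the re-entry / fluctuation terms, which scale like $\lambda$ times something of order $k$ (coming from the spread of opinions being at most $k$ and from eigenvalue-based bounds on how much $S$ can move), stay $o(1)$; it is also presumably what is needed so that the weighted average $c$ barely moves during the whole process, which we need in order to conclude in Theorem~\ref{thm:expander1} that the surviving pair brackets $c$. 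I would make this precise by showing $|S_t - S_0|=o(1)$ whp: each step changes $S$ by $\pm\pi_v=O(1/n)$, there are $o(n^2)$ steps so a naive bound gives $o(n)$, which is useless, so instead one needs a martingale bound exploiting that the \emph{expected} change in $S$ at each step is essentially zero by reversibility (the probability $v$ pulls from $w$ versus $w$ pulls from $v$ are in ratio $\pi_w:\pi_v$), plus a variance/Azuma estimate of order $\sqrt{T}/n = o(1)$ when $T=o(n^2)$; the cross terms are where $\lambda$ enters.

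**The main obstacle** will be the re-entry problem in the drift argument: a vertex that leaves $A_M$ can come back, and near the end of a stage (when $N$ is small) the re-entry rate is comparable to the exit rate, so the clean $\Omega(N/n)$ drift degrades. Handling the "last few vertices" of each stage — showing $A_M$ actually hits zero rather than lingering at size $O(1)$ forever — requires a more delicate argument, e.g.\ showing that with constant probability in each $O(n)$-step window the set is annihilated, using expansion to rule out a small stable clique of $M$-vertices, and that $M-1$-vertices adjacent to them are themselves under downward pressure. This is the place where the expander hypothesis does real work beyond a black-box mixing-time bound, and where I expect the bulk of the technical effort (and the careful use of $\pi_{\min}=\Theta(1/n)$, which makes every vertex "heavy enough" to be selected often) to go. Once that per-stage annihilation bound is in hand, a union bound over the $<2k$ stages, each costing $o(n^2/k)$ steps whp, finishes the proof.
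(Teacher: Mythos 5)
There is a genuine gap at the heart of your per-stage drift claim. You assert that once $N=|A_M|\le n/2$, expansion gives $\Omega(N)$ edges from $A_M$ to its complement and hence a downward drift of $\Omega(N/n)$ per step. But exits from $A_M$ are offset by promotions from $A_{M-1}$: a vertex at level $M-1$ that pulls a neighbour in $A_M$ re-enters. The expected one-step change of $\pi(A_M)$ is $\tfrac1n\bigl(Q(A_{M-1},A_M)-Q(A_M,A_M^C)\bigr)$, and since $A_{M-1}\subseteq A_M^C$ the net drift is governed by $Q(A_M,\,V\setminus(A_M\cup A_{M-1}))$, i.e.\ by the mass strictly below level $M-1$, which can be arbitrarily small (e.g.\ when almost all weight sits on levels $M-1$ and $M$ but a sliver of low opinions still survives). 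So the claimed $\Omega(N/n)$ drift, and with it the $O(n\log n)$ per-stage and $O(kn\log n)$ total bounds, are not established; indeed the paper's own bound \eqref{T-val} contains the extra terms $n^2\lambda k+n^{5/3}\log n+n^2\sqrt{\lambda}$ precisely because the stages cannot all be finished in $O(n\log n)$ steps. You do flag the "re-entry / last few vertices" issue as the main obstacle, but the suggested fix (annihilation in each $O(n)$-window via expansion) is only a sketch and does not address that the problem occurs throughout a stage, not just at its end.

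The paper's route is structurally different and avoids this trap. Instead of tracking one extreme set, Lemma \ref{lem:extremal_expander} tracks the \emph{product} $\pi(A_s(t))\pi(A_\ell(t))$ of the two extreme masses; the expander mixing lemma (Lemma \ref{lem:EML}) shows this product contracts by a factor $1-\tfrac1{2n}$ per step while at least four opinions remain (and by the weaker factor $1-\tfrac{\epsilon_2}{2n}$ with exactly three, which is where the $n^{5/3}\log n$ and $n^2\sqrt\lambda$ terms come from) -- but only down to a threshold $\epsilon\gtrsim\lambda^2$ (resp.\ $\lambda$), since the mixing-lemma error dominates for small sets. The endgame -- actually eliminating the small extreme opinion -- is handled separately by coupling DIV with two-opinion pull voting (Lemmas \ref{lem:domination}, \ref{lem:extreme_disapper_exp}) and invoking the linear-voting bound of \cite{CR16} (Lemma \ref{lem:pullupper}), which removes an opinion of mass $\epsilon$ within $O\bigl(n\sqrt{\epsilon}/((1-\lambda)\pi_{\min})\bigr)$ steps with probability $1/2$; each stage then succeeds with probability $1/4$ and is repeated. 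The hypothesis $\lambda k=o(1)$ enters through the $k$ endgames each costing about $n^2\lambda$ steps, not through fluctuation terms in a concentration bound as you suggest. Your martingale/Azuma discussion of the weighted average is fine but belongs to Theorem \ref{thm:expander1} (via Lemma \ref{ThA}), not to the present statement. To repair your approach you would need either a potential that couples the two extremes (as the paper does) or a separate mechanism to finish off a small surviving extreme opinion; as written, the proof does not go through.
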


The expected upper bound on $T$ in Theorem \ref{thm:expander_gen} is 
\begin{equation}\label{T-val}
\E[T]=O(k\cdot  n\log n +n^{5/3} \log n + \lambda k \cdot n^2 +\sqrt{\lambda} \cdot n^2 ).
\end{equation}

Let $Z(t)=n \sum_{v \in V}\pi_v X_v(t)$ be the degree biased total weight.
Let $Z(0)=cn$ the initial  total weight, so that $c$ is the initial weighted average.
Combining \cref{thm:expander_gen} and (iii) of \cref{ThA} implies the following theorem.
\begin{theorem}\label{thm:expander1}
{\sc Asynchronous incremental voting on expanders.}\\
    Consider  asynchronous incremental voting in the  vertex process on a graph $G$ with opinions from $[k]$.
    Suppose $\lambda k=o(1)$, $k=o(n/\log n)$, and $\pi_{\min}=\Theta(1/n)$.
    Let $i$ such that $i\leq c<i+1$, where $c$ is the initial weighted average. Then w.h.p. the winning opinion is $i=\floor{c}$ 
    with probability $p\sim i+1-c$, and is $i+1=\ceil{c}$ with probability $q\sim c-i$.
\end{theorem}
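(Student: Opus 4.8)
The plan is to track the degree-biased total weight $Z(t)=n\sum_{v\in V}\pi_v X_v(t)$ throughout the process: show it is a martingale for the vertex process, deduce that it stays close to $Z(0)=cn$, and then read off from its value at the moment the opinions collapse to two consecutive values both \emph{which} pair survives and the split of the final winner, invoking \cref{thm:expander_gen} for the collapse and part (iii) of \cref{ThA} for the two‑opinion endgame. \emph{Step 1 ($Z$ is a martingale).} Let $(\mathcal F_t)$ be the filtration generated by the configurations. At a step where ``$v$ chooses $w$'' only $X_v$ moves, by $\mathrm{sgn}(X_w-X_v)\in\{-1,0,1\}$, so $Z$ moves by $n\pi_v\,\mathrm{sgn}(X_w-X_v)$. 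Using \eqref{avtrans} and $\pi_v=d(v)/2m$,
\begin{equation*}
\E[Z(t+1)-Z(t)\mid\mathcal F_t]=\sum_{v}\sum_{w\sim v}\frac1n\cdot\frac1{d(v)}\cdot n\pi_v\,\mathrm{sgn}(X_w-X_v)=\frac1{2m}\sum_{v}\sum_{w\sim v}\mathrm{sgn}(X_w-X_v)=0,
\end{equation*}
since each unordered edge $\{v,w\}$ contributes $\mathrm{sgn}(X_w-X_v)+\mathrm{sgn}(X_v-X_w)=0$. Thus $(Z(t))$ is a bounded martingale with $Z(0)=cn$, and in a consensus configuration (all vertices at opinion $j$) it equals $nj$; since the process a.s.\ reaches consensus, optional stopping already gives $\E[J]=c$ for the random winner $J$, and it remains to upgrade this to $J\in\{\floor c,\ceil c\}$ w.h.p.\ with the stated split.

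\emph{Step 2 (concentration of $Z$).} Under the hypotheses all $\pi_v=\Theta(1/n)$, so $|Z(t+1)-Z(t)|\le n\pi_{\max}=O(1)$ and $\E[(Z(t+1)-Z(t))^2\mid\mathcal F_t]\le n\sum_v\pi_v^2\le n\pi_{\max}=O(1)$. By \cref{thm:expander_gen} there is a deterministic $T_0=o(n^2)$ such that w.h.p.\ the first time $\tau$ at which only two consecutive opinions remain satisfies $\tau\le T_0$. Doob's $L^2$ maximal inequality applied to the martingale up to time $T_0$ gives, for fixed $\varepsilon>0$,
\begin{equation*}
\Pr\Big(\max_{t\le T_0}|Z(t)-cn|\ge\varepsilon n\Big)\le\frac{\sum_{t<T_0}\E[(Z(t+1)-Z(t))^2]}{\varepsilon^2n^2}\le\frac{O(T_0)}{\varepsilon^2n^2}=o(1),
\end{equation*}
so with probability $1-o(1)$ we have both $\tau\le T_0$ and $|Z(\tau)-cn|\le\varepsilon n$.

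\emph{Step 3 (identifying the pair and the split).} On the event of Step 2 let $\{i,i+1\}$ be the surviving opinions, $A_j=A_j(\tau)$ the vertices at opinion $j$ at time $\tau$, and $\pi(A_j):=d(A_j)/2m$. Then $Z(\tau)/n=i\,\pi(A_i)+(i+1)\pi(A_{i+1})=i+\pi(A_{i+1})\in[i,i+1]$ while $|Z(\tau)/n-c|\le\varepsilon$; choosing $\varepsilon<\min(c-\floor c,\ \ceil c-c)$ (possible when $c\notin\mathbb Z$) forces $i=\floor c$ and $\pi(A_{i+1})=c-i+O(\varepsilon)$. From time $\tau$ onward only the adjacent opinions $i,i+1$ are present, and a vertex that sees the other opinion simply copies it, so the dynamics is exactly two-opinion pull voting on $\{i,i+1\}$; part (iii) of \cref{ThA} (equivalently \eqref{probw}) gives $\Pr(i+1\text{ wins}\mid\mathcal F_\tau)=\pi(A_{i+1})$ and $\Pr(i\text{ wins}\mid\mathcal F_\tau)=\pi(A_i)=1-\pi(A_{i+1})$. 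Averaging over $\mathcal F_\tau$, adding the $o(1)$ probability of the complementary event, and letting $\varepsilon\to0$ slowly yields $q=\Pr(i+1\text{ wins})=(c-i)+o(1)\sim c-i$ and $p=\Pr(i\text{ wins})=(i+1-c)+o(1)\sim i+1-c$. (When $c\in\mathbb Z$ the same squeeze puts mass $1-O(\varepsilon)$ on opinion $c$, so $c$ wins w.h.p., matching $p\sim i+1-c=1$, $q\sim c-i=0$.)

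\emph{Main obstacle.} The one genuine subtlety is controlling $Z$ at the \emph{random} collapse time $\tau$ rather than at a fixed time: this is why Step 2 truncates at the deterministic horizon $T_0=o(n^2)$ supplied by \cref{thm:expander_gen} and union-bounds the $o(1)$ events $\{\tau>T_0\}$ and $\{\max_{t\le T_0}|Z(t)-cn|\ge\varepsilon n\}$. A secondary point is how close $c$ may sit to an integer: for the asymptotic equalities $p\sim i+1-c$, $q\sim c-i$ to carry content when $c-i$ or $i+1-c$ is $o(1)$, the additive error in Step 3 must be $o(c-i)$ and $o(i+1-c)$, which requires the sharper deviation bound $|Z(\tau)-cn|=O(\sqrt{T_0})=o(n)$ from the variance estimate (and, in the extreme case, the fact that the relevant $T_0$ in \eqref{T-val} is $o(n^2)$ with room to spare) rather than the crude maximal inequality; for $c$ bounded away from the integers the statement is immediate from Steps 1--3.
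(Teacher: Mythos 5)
Your proposal is correct and follows essentially the same route as the paper: the paper's proof of Theorem \ref{thm:expander1} is simply "combine Theorem \ref{thm:expander_gen} with Lemma \ref{ThA}(iii)," and your Steps 1--3 just re-derive the content of Lemma \ref{ThA} (the martingale property of $Z$, concentration of the weight up to the $o(n^2)$ collapse time, and the two-opinion winning probabilities \eqref{probw}), with Doob's $L^2$ maximal inequality in place of the paper's Azuma--Hoeffding bound \eqref{W-conc}. The only caveats — treating $\pi_{\max}=O(1/n)$ as following from $\pi_{\min}=\Theta(1/n)$, and the interpretation of $p\sim i+1-c$ when $c$ is very close to an integer — are shared with the paper's own statement of Lemma \ref{ThA}(iii), so they do not distinguish your argument from the paper's.
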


For near regular graphs, Theorems \ref{thm:expander_gen} and \ref{thm:expander1} hold equally for the edge process by the following remark.
\begin{Remark}\label{St=Zt}
For regular graphs, $Z(t)$ coincides with
$S(t)=\sum_{v \in V} X_v(t)$, the total sum of all opinions held at step $t$. For graphs where for all vertices $v$,
$d(v)=(1+o(1))d$,  we have $\pi_v \sim 1/n$. Thus $S(t)=(1+o(1))Z(t)$, and results for the 
asynchronous vertex process also hold for the asynchronous edge process.
 \end{Remark}

\paragraph{Previous work.}
Discrete incremental voting was analysed  in~\cite{DIVFULL} using ad-hoc methods for
the complete graph $K_n$, and random graphs $G(n,p)$ with restrictions on the range of $p$. 
Unlike ordinary pull voting, no general method is known to predict the outcome of incremental voting.
Indeed it was also shown in ~\cite{DIVFULL} that there exist 
examples of graphs where an opinion other than $\lceil c \rceil$ or $\lfloor c\rfloor$ can win with constant probability
in the case where $\lambda k=\Omega(1)$.
The $n$-vertex path graph has $\lambda=1-O(1/n^2)$. Starting with initial opinions \{0,1,2\}, there exist initial 
configurations of opinions such that each of the three opinions  wins with positive probability, (see Theorem 3 of ~\cite{DIVFULL}) .

\paragraph{Graphs with small second eigenvalue.} 
To illustrate the applicability of Theorem \ref{thm:expander1}, we give three examples of classes of graphs with
small second eigenvalue. 
For further examples such as near regular graphs see e.g., \cite{CERReigen} for eigenvalue bounds.

\begin{itemize}
\item  The second eigenvalue of the complete graph $K_n$ has absolute value $\l=1/(n-1)$.
Thus  Theorem \ref{thm:expander1} holds  for $K_n$ provided $k=o(n/\log n)$. 

\item Random $d$-regular graphs with $n$ vertices, where $d\leq n/2$.
Then $\lambda \leq O(1/\sqrt{d})$ with high probability~(see \cite{CGJ18,TY19}).
Hence, we can apply \cref{thm:expander1}~if $k=o(\sqrt{d})$. 

\item Random graphs $G(n,p)$, where $2(1+o(1)) \log n \le np \le 0.99 n$.
Then $\lambda \le (1+o(1)){2}/{\sqrt{np}}$ w.h.p.~(see \cite{ACO}, Theorem 1.2).
As $\pi_{\min}=\Theta(1/n)$ w.h.p.~here, we can apply \cref{thm:expander1}~provided $k = o(\sqrt{np})$.
\end{itemize}

\paragraph{Strong concentration of final average.}  
For graphs with sufficiently small second eigenvalue, the final integer average obtained from DIV can exhibit remarkable concentration.  
It follows from \eqref{eq:exponential decay} with $h=\om \log n$, that 
 the probability $T$ exceeds $T^*=\om \log n \;\E[T]$ is at most $O(n^{-\om})$. 
Thus w.h.p. by step $T^*$ only two consecutive opinions remain.

As an example, consider $K_n$. In which case $\l= 1/(n-1)$,  and
\[
\E [T]=
O(k\cdot  n\log n +n^{5/3} \log n + k \cdot n+ n^{3/2})=O(k\cdot  n\log n +n^{5/3} \log n).
\]
 Assume $\min(c-\floor{c},\ceil{c}-c) \ge \d$ constant.
Suppose the process fails to return the original integer average (either $\floor{c}$ or$\ceil{c}$). 
Then  by step $T^*$, the total weight $W(t)$ must have changed by at least $\d n$. In which case
using \eqref{W-conc}, 
\[
\Pr\left[|W(t)-W(0)|\geq \d n \right]\leq 2e^{-\frac{\d^2 n^2}{O(T^*)}}= \exp\brac{-\Om \bfrac{\d^2 n}{\om \log^2 n(k+n^{2/3})}}.\]
For example, if $k=O(n^{2/3})$ the RHS above is  $ e^{ -\Om(n^{1/4})}$. In which case the probability 
DIV does not return $\floor{c}$ or$\ceil{c}$ as claimed 
is $O(n^{-\om})$.

\paragraph{Proof outline.}
\ignore{
Let $T_2(G)$ be the worst-case\footnote{By worst case we mean with the worst initial mixture of 0's and 1's arranged on the vertices in
the worst manner.} 
expected time to consensus for two-opinion pull voting on a connected graph $G$.
For the complete graph $K_n$  
and $G_{n,p}$ (for a sufficiently large $p$) we have
 $T_2=\Th(n^2)$, by comparison with a coalescing random walk. 

It is shown in  Lemma~\ref{T2} below, that the expected time
for an extreme opinion to disappear  in incremental voting is $O(T_2)$. 
Thus the expected time to consensus on $G$ with initial
opinions in $\{1,\ldots,k\}$ is at most $O(k T_2)$. However this is too slow to preserve the initial average. 
}

In Theorem \ref{Expandy}
the number of opinions is  reduced from $k$ to three consecutive integer values  in $O(nk \log n)+O(n^2 \lambda k)$
expected steps, and in a further $O(n^{5/3} \log n) +O(n^2 \sqrt{\lambda})$ expected steps only two consecutive opinions remain. 
Thus if $T$ is the number of steps  needed to reduce the system to two consecutive opinions, then as in \eqref{T-val},
\[
\E(T)=O(nk \log n +n^2 \lambda k +n^{5/3} \log n +n^2 \sqrt{\lambda}).
\]
Let $S(t)$ be the total sum of all opinions held at step $t$, and  $S(0)=cn$ the initial total. 
By Azuma's lemma, the total $S(t)$ remains concentrated w.h.p. at any step $t=o(n^2)$.
Thus provided $T=o(n^2)$
the final two-opinion voting stage begins with $S(T) \sim S(0)$.

The actual method to reduce the number of opinions depends on observing that the update probability
\eqref{avtrans} is proportional to the transition probability of a random walk, 
and then applying the expander mixing lemma (Lemma \ref{lem:EML})
and a linear voting lemma (Lemma \ref{lem:pullupper}).
This has to be done with some care because, in order to maintain concentration of the average opinion around its initial value,
we will need to prove that all but two consecutive opinions disappear
within $o(n^2)$ steps. 

A stage of the proof consists of  removing one of the extreme opinions (either the  smallest
or the largest opinion), thus reducing the range of the opinions by one. 
In Section \ref{reduce} we initialize each stage of the
proof by using Lemma \ref{lem:extremal_expander}, (an application of the expander mixing lemma, see Lemma \ref{lem:EML}),
to decrease the stationary measure of one of the
extreme opinions to a  threshold $\epsilon$ with probability
$1/2$.  In the case where at least 4 opinions remain, this occurs within $O(n\log (1/\epsilon))$ steps,
provided $\epsilon$ is at least $\lambda^2$. 

It is known from a proof in  \cite{CR16} on the  voting model, presented in Section \ref{couple} of this paper  as Lemma \ref{lem:pullupper}  (and adapted to DIV subsequently in that section),
that if the stationary measure of one opinion is
sufficiently small at the beginning (our value $\epsilon$), it disappears within the time
$T(\epsilon)$ specified in Lemma \ref{lem:pullupper} with probability $1/2$. 
However,  with positive probability $1/2$ neither extreme opinion  disappears,   and we are
back  where we started.
This requires a back and forth argument between the expander mixing approach, Lemma \ref{lem:extremal_expander},
and Lemma \ref{lem:pullupper}, with a probability of success $1/4$ at each try until we
succeed. In Section \ref{sec:Proof of main thm} this is all put together to prove Theorem  \ref{Expandy}, the
times to remove an extreme opinion with probability $1/4$ being summarised in
\eqref{Tvals}, thus leading to the expected time to finish $\E T$ 
stated in \eqref{T-val} below Theorem \ref{Expandy}.

Interestingly none of these arguments  sheds any light on the final
consensus opinion, or tells us what the final two consecutive opinions are. This is
obtained indirectly in Lemma \ref{ThA} via a separate martingale argument. Namely, that if in
$o(n^2) $ steps only two consecutive opinions are left, then the total weight of
the opinions remains concentrated around its original value, hence leading us back
to the start of the proof explanation several paragraphs above.

\paragraph{Notation.}
For functions $a=a(n)$ and $b=b(n)$, $a \sim b$  denotes $a=b(1+o(1))$,
where $o(1)=o_n(1)$ is a function of $n$ which tends to zero as $n \rai$. We use $\om$ to denote a generic quantity tending to infinity as $n \rai$, but suitably slowly as required in the given proof context. 
An event $A$ on an $n$-vertex graph holds with high probability (w.h.p.),
if $\Pr(A)=1-o_n(1)$. We also use the notation
$\|\pi\|_2 = \sqrt{\sum_{v\in V} \pi_v^2}$, $\|\pi\|_\infty = \max_{v\in V}\pi_{v}$, and $\pi_{\min}=\min_{v\in V}\pi_v$, where $\pi_v=d(v)/2m$.

\section{Basic properties of  incremental voting }\label{Sec2}

With the exception of Lemma \ref{lem:Azuma-Hoeffding}, the following results in this section are from  \cite{DIVFULL}. We restate them for convenience.

{Let $X(t)=(X_v(t): v\in V)$ be the vector of integer opinions held by the vertices at step~$t$,
where $X(0)$ is the vector of initial opinions.
We use the notation $A_i(t) = \{v\in V: X_v(t) =i\}$ for the set of vertices holding opinion $i\in \{1,...,k\}$ at time~$t$.
The weight of  vertex  $v$ in the edge process is $X_v$, and the weight in the vertex process is $\pi_v X_v$, where $\pi_v=d(v)/2m$ and $m$ is the number of edges of the graph.

Let $S(t)$ be the total weight of the edge process at step $t \ge 0$:
$S(t)=\sum_{v\in V} X_v(t)=\sum_j j N_j(t)$, where  $N_i(t)=|A_i(t)|$.
The average of the initial opinions $c=S(0)/n$.}
Similarly,  
 let $Z(t)=n \sum_{v\in V} \pi_v X_v(t)$ be the total (degree biased) weight in the vertex process, and $c=Z(0)/n$ the initial (degree biased) average.
For regular graphs, $\pi_v=1/n$, in which case the processes coincide and $S(t)=Z(t)$.

Denote the total weight of the DIV process by $W(t)$. Thus $W(t)=S(t)$ in the edge process, and  $W(t)=Z(t)$ in the vertex process.
A random variable $W(t), t=0,1,...$   is a martingale with respect to a sequence $X(0),\cdots,X(t)$,
if its expected value at  step  $t+1$  depends only on  $X(t)$ and  satisfies $\E (W(t+1) \mid X(t)) = W(t)$.
The  next lemma shows that the  total weight $W(t)$ is a martingale with respect to the current vector of opinions $X(t)$.

\begin{lemma}\label{Lemma1}\label{SKn}\label{Lemma2}{\sc The average weight is a martingale.}

The following hold for each $t\ge 0$.
\vspace{-0.1in}
\begin{enumerate}[(i)]
\item
{\bf Asynchronous edge process.}   For arbitrary graphs, $S(t)$ is a martingale.
\item
{\bf Asynchronous vertex process. }  For arbitrary graphs, $Z(t)$ is a martingale.
\end{enumerate}
\end{lemma}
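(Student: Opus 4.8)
The plan is to verify the martingale property directly from the transition rule~\eqref{Tab} by computing $\E[W(t+1)\mid X(t)]-W(t)$ and showing it vanishes. Since only the single chosen vertex $v$ changes its opinion, and it changes by $\pm 1$ or $0$, the increment $W(t+1)-W(t)$ equals $w_v\cdot\Delta$, where $w_v$ is the weight of vertex $v$ ($w_v=1$ in the edge process, $w_v=\pi_v$ up to the normalising factor $n$ in the vertex process) and $\Delta\in\{-1,0,+1\}$ is the sign $\mathrm{sgn}(X_w-X_v)$. Hence the conditional expected increment is a sum over ordered pairs $(v,w)$ of $\Pr(v\text{ chooses }w)\cdot w_v\cdot \mathrm{sgn}(X_w(t)-X_v(t))$.

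The key step is to observe a pairing/antisymmetry between the ordered pair $(v,w)$ and the reversed pair $(w,v)$. For each unordered edge $\{v,w\}$, the term coming from ``$v$ chooses $w$'' carries $\mathrm{sgn}(X_w-X_v)$ while the term from ``$w$ chooses $v$'' carries $\mathrm{sgn}(X_v-X_w)=-\mathrm{sgn}(X_w-X_v)$, so the two sign factors are opposite. For the edge process this is immediate because $\Pr(v\text{ chooses }w)=\Pr(w\text{ chooses }v)=1/(2m)$ and $w_v=w_w=1$, so the two contributions cancel exactly, edge by edge, giving $\E[S(t+1)\mid X(t)]=S(t)$. For the vertex process the relevant product is $\Pr(v\text{ chooses }w)\cdot \pi_v = \frac1n\frac{\mathbbm{1}_{\{v,w\}\in E}}{d(v)}\cdot\frac{d(v)}{2m}=\frac{\mathbbm{1}_{\{v,w\}\in E}}{2mn}$, which is symmetric in $v$ and $w$; so again the contribution from $(v,w)$ and from $(w,v)$ are equal in magnitude and opposite in sign, and summing over all unordered edges gives $\E[Z(t+1)\mid X(t)]=Z(t)$. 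This symmetry is exactly the reversibility of the simple random walk, which is why the degree-biased weighting is the right one for the vertex process.

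Concretely I would write, for the vertex process,
\[
\E[Z(t+1)-Z(t)\mid X(t)]
= n\sum_{v\in V}\sum_{w\sim v}\Pr(v\text{ chooses }w)\,\pi_v\,\mathrm{sgn}\!\big(X_w(t)-X_v(t)\big)
= \frac{1}{2m}\sum_{\{v,w\}\in E}\Big(\mathrm{sgn}(X_w-X_v)+\mathrm{sgn}(X_v-X_w)\Big)=0,
\]
and analogously for $S(t)$ in the edge process with weight $1$ in place of $\pi_v$ and probability $1/(2m)$. Since $W(t)$ is a bounded function of $X(t)$ and the update depends only on $X(t)$ (and fresh randomness for the choice of $(v,w)$), $X(0),\dots,X(t)$ determines $W(t)$ and the Markov property gives $\E[W(t+1)\mid X(0),\dots,X(t)]=\E[W(t+1)\mid X(t)]=W(t)$, which is the martingale property.

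There is no real obstacle here; the only thing to be careful about is bookkeeping of the ordered-versus-unordered edge sums and the cancellation of the degree factor $d(v)$ against $\pi_v=d(v)/2m$ in the vertex process, which is the whole reason the \emph{degree-biased} average (rather than the plain average) is conserved on non-regular graphs. It is worth remarking explicitly that the $\mathrm{sgn}$ function handles all three cases of~\eqref{Tab} simultaneously, including $X_v=X_w$ where the increment is genuinely $0$, so no case analysis on the magnitude of $|X_v-X_w|$ is needed.
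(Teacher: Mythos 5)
Your proof is correct: the whole lemma reduces to the edge-by-edge cancellation of $\mathrm{sgn}(X_w-X_v)+\mathrm{sgn}(X_v-X_w)$, and your identity $n\Pr(v\text{ chooses }w)\,\pi_v=\frac{1}{2m}$ (reversibility of the walk) is exactly what makes the degree-biased weight the conserved quantity in the vertex process, with the weight-$1$, probability-$1/(2m)$ version handling the edge process. The paper itself gives no proof of this lemma (it restates it from \cite{DIVFULL}), and your direct verification of $\E[W(t+1)\mid X(t)]=W(t)$ is the standard argument, so this matches what is needed.
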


As the average opinion is a martingale (see Lemma \ref{Lemma1}), in cases where the process converges rapidly to two neighbouring states $\{i,i+1\}$, the 
Azuma-Hoeffding lemma (Lemma \ref{lem:Azuma-Hoeffding}) 
guarantees that the total weight is still asymptotic to its initial value.
Combining the above information with
 known results on the winning probabilities in two-opinion pull voting, see \eqref{probw},
 allows us   to predict the outcome of the process in Lemma \ref{ThA}(iii).

\begin{lemma}[The Azuma-Hoeffding inequality]
\label{lem:Azuma-Hoeffding}
Let $(X_t)_{t=0,1,2,...}$ be a martingale.
Suppose $|X_i-X_{i-1}|\leq d_i$ holds for any $i\geq 0$.
Then, for any $T\geq 0$ and $\epsilon>0$,
\begin{align*}
\Pr\left[|X_T-X_0|\geq \epsilon\right]\leq 2\exp\left(-\frac{\epsilon^2}{2\sum_{i=1}^Td_i^2}\right).
\end{align*}
\end{lemma}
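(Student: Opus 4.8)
The plan is to prove this via the standard exponential-moment (Chernoff) argument applied to the martingale difference sequence. Set $Y_i = X_i - X_{i-1}$ for $i\ge 1$, so that $X_T - X_0 = \sum_{i=1}^{T} Y_i$, and let $\mathcal F_i$ denote the information generated by $X_0,\dots,X_i$. The martingale hypothesis gives $\E[Y_i \mid \mathcal F_{i-1}] = 0$, while by assumption $|Y_i|\le d_i$.

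The one substantive ingredient is Hoeffding's lemma: if $Y$ satisfies $\E[Y\mid\mathcal F]=0$ and $|Y|\le d$, then for every real $s$ we have $\E[e^{sY}\mid\mathcal F]\le e^{s^2 d^2/2}$. I would prove this from convexity of $x\mapsto e^{sx}$: for $x\in[-d,d]$,
\[
e^{sx}\;\le\;\frac{d-x}{2d}\,e^{-sd}+\frac{d+x}{2d}\,e^{sd};
\]
taking the conditional expectation and using $\E[Y\mid\mathcal F]=0$ kills the linear term and leaves $\E[e^{sY}\mid\mathcal F]\le \cosh(sd)\le e^{s^2 d^2/2}$, the last step being the elementary Taylor-series comparison $\cosh u\le e^{u^2/2}$.

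With this in hand, fix $s>0$ and peel the terms off one at a time. Since $e^{s\sum_{i=1}^{T-1}Y_i}$ is $\mathcal F_{T-1}$-measurable, the tower property gives
\[
\E\!\left[e^{s(X_T-X_0)}\right]=\E\!\left[e^{s\sum_{i=1}^{T-1}Y_i}\,\E\!\left[e^{sY_T}\mid\mathcal F_{T-1}\right]\right]\le e^{s^2 d_T^2/2}\,\E\!\left[e^{s\sum_{i=1}^{T-1}Y_i}\right],
\]
and iterating down to $i=1$ yields $\E[e^{s(X_T-X_0)}]\le \exp\!\big(\tfrac{s^2}{2}\sum_{i=1}^{T}d_i^2\big)$. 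By Markov's inequality, $\Pr[X_T-X_0\ge\epsilon]\le \exp\!\big(-s\epsilon+\tfrac{s^2}{2}\sum_{i=1}^{T}d_i^2\big)$, and the choice $s=\epsilon/\sum_{i=1}^{T}d_i^2$ minimizes the exponent, giving the bound $\exp\!\big(-\epsilon^2/(2\sum_{i=1}^{T}d_i^2)\big)$. Applying the identical argument to the martingale $(-X_t)_t$, whose differences obey the same bound $d_i$, controls $\Pr[X_0-X_T\ge\epsilon]$ in the same way, and a union bound over the two one-sided events produces the factor $2$ and finishes the proof.

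There is no real obstacle here: this is a textbook inequality. The only point requiring any care is Hoeffding's lemma, specifically getting the convexity inequality and the estimate $\cosh u\le e^{u^2/2}$ right; the remaining steps are routine manipulation of conditional expectations and a one-variable optimization.
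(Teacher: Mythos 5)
Your proof is correct and is the standard textbook derivation of the Azuma--Hoeffding inequality (Hoeffding's lemma via convexity, the tower-property peeling of the exponential moment, Markov's inequality with the optimal choice $s=\epsilon/\sum_{i=1}^T d_i^2$, and a union bound over the two one-sided tails). The paper itself offers no proof of this lemma --- it is stated as a classical inequality and used as a black box --- so there is nothing to compare against; your argument fills that gap completely and correctly.
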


\vspace{-0.075in}
For DIV, $d_i\le 1$ as opinions change by at most one at any step. Thus the total weight $W(t)$ satisfies
\begin{equation} \label{W-conc}
\Pr\left[|W(t)-W(0)|\geq h \right]\leq 2e^{-\frac{h^2}{2t}}. 
\end{equation}

As the process is randomized, the final value on a connected graph  is a  {random variable with} distribution $D(i)$ on the initial values $\{1,...,k\}$,
where $D(j)=\Pr(j \text{ wins})$.
The following lemma helps us to characterize this distribution in certain cases.
If only two consecutive opinions $i, i+1$ remain at some step $t$, the process is equivalent to two-opinion pull voting, and we say the incremental voting is at the final stage.

\begin{lemma}\label{ThA}{\sc Distribution of winning value.}\;\;
Let $W(t)=S(t)$ when referring to
the edge model, and let $W(t)=Z(t)$, when referring to the vertex model.
Let {$W(0)=cn$} be the total initial weight,
where $n$ is the number of vertices in the graph and $c$ is the initial average opinion.
\begin{enumerate}[(i)]
\item {
For an arbitrary graph,
the expected average opinion at any step is always the initial average: $\E[W(t)/n] = W(0)/n = c$.
The process $W(t)$ converges to a time invariant random variable.}
\item
For a connected graph,
if at the start of the final stage only two opinions $i$ and $i+1$ remain and the total weight $W$ is {$c'n$},
then for any connected graph,
the winning opinion is $i$ with probability $p=i+1-c'$, or $i+1$
with probability $q=c'-i$.
\item For a connected graph,
suppose the final stage is reached in $T$ steps,
where 
$T= o(n^2)$ for the asynchronous edge process,
and $T = o(1/\| \pi\|^2_\infty)$ for the asynchronous vertex process.
Then w.h.p. $W(T) \sim cn$ and the results of part (ii) hold with $c' \sim c$.
That is, for $i$ such that $i\le c < i+1$,
the winning opinion is $i$ with probability $p\sim i+1-c$, and is $i+1$
with probability $q\sim c-i$.
\end{enumerate}
\end{lemma}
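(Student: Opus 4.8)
The plan is to prove the three parts in order, since each builds on the previous. For part (i), I would invoke Lemma \ref{Lemma1}: $W(t)$ is a martingale (either $S(t)$ in the edge model or $Z(t)$ in the vertex model), so $\E[W(t)] = W(0) = cn$ for every $t$, giving $\E[W(t)/n] = c$. For convergence to a time-invariant limit, observe that $W(t)$ is a bounded martingale — it lies in $[n, kn]$ at all times — so by the martingale convergence theorem it converges almost surely (and in $L^1$) to some random variable $W_\infty$. Since the only absorbing states of the Markov process are the singleton states (all vertices holding a common opinion $j$, for which $W = jn$ in the edge model), the limit $W_\infty$ is supported on $\{n, 2n, \ldots, kn\}$ and records the winning opinion.

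For part (ii), this is essentially a restatement of the known winning probabilities for two-opinion pull voting, combined with the martingale/optional-stopping identity. Once only opinions $i$ and $i+1$ remain, the process is exactly two-opinion pull voting on the values $\{i, i+1\}$, and by \eqref{probw} (after the affine shift $i \mapsto 0$, $i+1 \mapsto 1$, which leaves the winning probabilities unchanged), opinion $i+1$ wins with probability equal to the current degree-weighted mass on $A_{i+1}$, i.e.\ $q = c' - i$ where $c'n$ is the total weight $W$ at the start of this stage, and opinion $i$ wins with probability $p = 1 - q = i+1 - c'$. Alternatively — and this is the cleaner derivation — apply optional stopping to the bounded martingale $W(t)$ from the start of the final stage: at the (a.s.\ finite) absorption time $\tau$, $W(\tau) \in \{in, (i+1)n\}$, and $\E[W(\tau)] = c'n$ forces $\Pr(\text{opinion } i+1 \text{ wins}) = (c' - i)$.

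For part (iii), the point is to show $c' \sim c$ whp, i.e.\ that the total weight has not drifted appreciably by the time $T$ at which the final stage begins. This follows from the Azuma–Hoeffding concentration bound \eqref{W-conc}: taking $h = \omega \sqrt{T}$ for a slowly growing $\omega$, we get $\Pr[|W(T) - W(0)| \geq \omega\sqrt{T}] \leq 2e^{-\omega^2/2} = o(1)$. Under the hypothesis $T = o(n^2)$ (edge process) we have $\omega\sqrt{T} = o(n)$ for a suitably slowly growing $\omega$, so $|W(T) - cn| = o(n)$ whp, i.e.\ $c' = W(T)/n \sim c$; in the vertex process the bound $T = o(1/\|\pi\|_\infty^2)$ plays the analogous role, since there the per-step increment of $Z(t)$ is bounded by $n\|\pi\|_\infty$ rather than $1$, so $d_i \le n\|\pi\|_\infty$ and Azuma gives $\Pr[|Z(T) - Z(0)| \ge h] \le 2\exp(-h^2/(2T n^2\|\pi\|_\infty^2))$; taking $h = o(n)$ growing faster than $n\|\pi\|_\infty\sqrt{T\,\omega}$ makes this $o(1)$. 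Then part (ii) applied with $c' \sim c$ gives the stated winning probabilities $p \sim i+1-c$, $q \sim c-i$, where $i = \floor{c}$ (assuming $c \notin \mathbb{Z}$; if $c$ is an integer the two rounded values coincide and opinion $c$ wins whp).

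The main obstacle is the bookkeeping in part (iii): one must be careful that the stopping time $T$ is a valid stopping time with respect to the filtration $(X(t))$ so that $W(T)$ is well-defined and \eqref{W-conc} applies at the random time $T$ (strictly, one fixes a deterministic cutoff $T^\star = \omega' \E[T]$ with $T \le T^\star$ whp via Markov/the exponential tail, applies the fixed-time Azuma bound at $T^\star$, and then uses the martingale property on $[T, T^\star]$ — or simply notes that $W$ is constant after absorption, so the fixed-time bound at $T^\star$ controls $W(T)$ as well). The delicate point is matching the slow-growth rate of $\omega$ in $h = \omega\sqrt{T^\star}$ against the gap $\min(c - \floor c, \ceil c - c)$; when this gap is bounded below by a constant the argument is immediate, and when it is only $\Theta(1/\text{poly})$ one gets a correspondingly weaker but still $o(1)$-error conclusion. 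Everything else is routine given Lemmas \ref{Lemma1} and \ref{lem:Azuma-Hoeffding} and the classical two-opinion formula \eqref{probw}.
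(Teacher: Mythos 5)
Your proposal is correct and follows essentially the same route the paper indicates for this lemma (which it imports from \cite{DIVFULL} and only sketches): the martingale property of $W(t)$ from Lemma \ref{Lemma1}, the two-opinion winning probabilities \eqref{probw} (equivalently optional stopping) for part (ii), and Azuma--Hoeffding concentration as in \eqref{W-conc} for part (iii), with your observation that the vertex-process increments are bounded by $n\|\pi\|_\infty$ correctly explaining the hypothesis $T=o(1/\|\pi\|_\infty^2)$.
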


\begin{lemma}\label{T2} {\sc Completion time, a general bound.}
For  asynchronous incremental voting on
connected graph, the worst-case
expected time to eliminate one of the two
extreme opinions (over all initial configurations) is at most the worst-case
expected completion time of standard asynchronous two-opinion $\{0,1\}$
voting.
\end{lemma}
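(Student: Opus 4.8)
# Proof Plan for Lemma \ref{T2}

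The plan is to couple the incremental voting process with a standard two-opinion pull voting process in such a way that the event ``one of the two extreme opinions disappears'' in DIV is dominated by the event ``consensus is reached'' in two-opinion voting. First I would fix the graph $G$ and an initial configuration $X(0)$ with opinions in $\{1,\dots,k\}$, and let $i_{\min}$ and $i_{\max}$ denote the current smallest and largest opinions present. The key observation is that the set $A_{i_{\min}}(t)$ (vertices holding the current minimum) can only change by the ``boundary'' interactions: a vertex at $i_{\min}$ that pulls from a strictly larger neighbour moves up to $i_{\min}+1$ and leaves the set, and a vertex at $i_{\min}+1$ that pulls from a neighbour at $i_{\min}$ moves down and joins the set. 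No vertex ever jumps into $A_{i_{\min}}$ from a value $\ge i_{\min}+2$, and $i_{\min}$ itself never decreases. So the extreme opinion $i_{\min}$ vanishes exactly when $A_{i_{\min}}(t)$ becomes empty, and the dynamics of membership in $A_{i_{\min}}$ are governed only by the interface between ``value $=i_{\min}$'' and ``value $\ge i_{\min}+1$''.

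Next I would define a standard two-opinion $\{0,1\}$ pull voting instance on the same graph $G$ by the projection $\phi_t(v) = \mathbbm{1}\{X_v(t) = i_{\min}\}$, i.e. label a vertex $0$ if it currently holds the minimum opinion and $1$ otherwise, using the same randomness (same sequence of ``$v$ chooses $w$'' events). The goal is to show that whenever the two-opinion process on $\phi$ would make a step that changes some vertex's $\{0,1\}$-label, the DIV process makes at least as much ``progress'' toward eliminating $A_{i_{\min}}$. The subtle point is that in two-opinion voting a vertex labelled $1$ with a neighbour labelled $0$ always flips to $0$, whereas in DIV a vertex with value $i_{\min}+2$ or higher does not move down to $i_{\min}$ in one step — so the naive projection is not itself a faithful pull-voting process. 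The fix is to compare DIV not against the projection of DIV, but to argue via a monotone domination: the set $A_{i_{\min}}(t)$ in DIV is stochastically dominated (as a random walk on subsets, coordinate-wise) by the $0$-set in a genuine two-opinion voting process, because the only interface moves in DIV are between $i_{\min}$ and $i_{\min}+1$, which are exactly the moves of two-opinion voting restricted to that interface, plus the extra fact that vertices at value $\ge i_{\min}+2$ act as a ``reservoir of $1$'s'' that can only ever help the $1$ opinion. Formally, one shows the $0$-set in DIV is contained in, or grows no faster than, the $0$-set in the coupled two-opinion instance, so the DIV $0$-set hits $\emptyset$ no later than the two-opinion process reaches all-$1$ consensus (and symmetrically for $i_{\max}$).

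The main obstacle, and the step I expect to need the most care, is making the coupling monotone in the right direction despite the ``reservoir'' of intermediate opinions behaving asymmetrically — a vertex at $i_{\min}+1$ can leave $A_{i_{\min}+1}$ by moving up (to $i_{\min}+2$), which in the two-opinion projection corresponds to a $1$ staying $1$, so that is harmless, but a vertex at $i_{\min}+2$ pulling from $i_{\min}+1$ moves down to $i_{\min}+1$, which does not directly touch $A_{i_{\min}}$ but changes who sits at the interface. I would handle this by tracking only the indicator variable ``$A_{i_{\min}}(t) = \emptyset$'' and showing that, conditioned on the current opinion vector, the hitting time of $\{A_{i_{\min}}=\emptyset\}$ under DIV is stochastically at most the consensus time of two-opinion voting started from the worst possible split $(|A_{i_{\min}}(0)|, n - |A_{i_{\min}}(0)|)$; taking the maximum over initial splits and over which extreme we track gives the claimed bound by the worst-case two-opinion completion time. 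Once this domination is in place, taking expectations and maximising over all initial configurations of DIV finishes the proof; I would also remark that the same coupling argument works verbatim for both the asynchronous vertex process and the asynchronous edge process, since only the common sampling rule for ``$v$ chooses $w$'' is used.
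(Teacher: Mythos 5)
Your overall instinct---couple DIV with a two-opinion pull voting process driven by the same ``$v$ chooses $w$'' choices and argue by containment of the extreme-opinion set---is the same idea the paper relies on (its Lemma~\ref{lem:domination}, restated from \cite{DIVFULL}). However, the step you lean on at the end is genuinely false as stated: the hitting time of $\{A_{i_{\min}}=\emptyset\}$ is \emph{not} stochastically dominated by the two-opinion consensus time, because the minimum opinion may simply win. Already for $k=2$, DIV coincides with two-opinion pull voting, and with probability $N_{i_{\min}}/n$ (edge process) the set $A_{i_{\min}}$ never empties, so its hitting time is infinite with positive probability while the consensus time has finite expectation. This is exactly why Lemma~\ref{T2} is phrased as eliminating \emph{one of the two} extreme opinions, not a prescribed one. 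Handling $i_{\min}$ and $i_{\max}$ by two separate, ``symmetric'' dominations does not repair this: each auxiliary two-opinion process may reach the consensus that says nothing about the extreme you are tracking in it (your own phrasing ``hits $\emptyset$ no later than the two-opinion process reaches all-$1$ consensus'' is conditional on that consensus occurring, which it need not), and from two one-sided statements about two different auxiliary processes you cannot conclude that at the consensus time of a single two-opinion run one of the two DIV extremes is gone.

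The missing idea is a \emph{single} coupled two-opinion process with both containments holding simultaneously, which is what the paper's Lemma~\ref{lem:domination} asserts: start the pull voting with $B(0)=A_s(0)$ and use the identity coupling on the $(v,w)$ choices; then by induction $A_s(t)\subseteq B(t)$ \emph{and} $A_\ell(t)\subseteq V\setminus B(t)$ for all $t$. The induction is immediate because a vertex can have value $s$ at time $t+1$ only if the pulled neighbour $w$ has value $s$ at time $t$ (whether $v$ was at $s$ or at $s+1$), and then $w\in A_s(t)\subseteq B(t)$ forces $v$ to acquire the $B$-label in the pull process; symmetrically for $\ell$. Your ``reservoir'' concern disappears because only containment, not equality, is needed---extra $B$-labelled vertices sitting at intermediate values are harmless. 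With both containments against the same run, whichever consensus the pull voting reaches ($B=\emptyset$ or $B=V$), one of $A_s$, $A_\ell$ is empty at that moment, and taking expectations and the worst case over initial configurations gives Lemma~\ref{T2}. Your remark that the argument applies verbatim to both the vertex and the edge process is correct, since only the common sampling rule is used.
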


\begin{corollary}\label{kl4jkd3w2}
The expected completion time of the discrete
incremental voting
is $O(k \cdot \cT_{2-vote})$,
where $\cT_{2-vote}$ is the worst-case\footnote{By worst case we mean with the worst initial mixture of 0's and 1's arranged on the vertices in
the worst manner.}  expected
completion time of the 2-opinion voting.
\end{corollary}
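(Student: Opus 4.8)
The statement to prove is that the expected completion time of discrete incremental voting with opinions in $\{1,\ldots,k\}$ is $O(k\cdot \cT_{2\text{-}vote})$. The plan is to view the whole process as a sequence of at most $k-1$ \emph{stages}, where a stage ends as soon as one of the two currently extreme opinions is eliminated, so that the range of opinions present in the system shrinks by at least one. Since the set of opinions present can only lose extreme values irreversibly (intermediate opinions may disappear and reappear, but the minimum and maximum present can only move inward — as discussed after \eqref{Tab}), after at most $k-1$ such stages only a single opinion remains and the process has reached consensus.

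The key step is to bound the expected duration of a single stage. First I would fix the configuration $X(t_0)$ at the beginning of a stage, and let $i_{\min}$ and $i_{\max}$ be the smallest and largest opinions present. The stage ends when either $A_{i_{\min}}(t)=\es$ or $A_{i_{\max}}(t)=\es$. By Lemma \ref{T2}, the worst-case expected time (over all initial configurations) to eliminate one of the two extreme opinions in incremental voting is at most the worst-case expected completion time of standard two-opinion $\{0,1\}$ voting, namely $\cT_{2\text{-}vote}$. Hence, regardless of the configuration at the start of the stage, the expected time until that stage ends is at most $\cT_{2\text{-}vote}$.

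Finally I would assemble these bounds by linearity of expectation together with a stopping-time / tower-property argument: letting $\tau_1 \le \tau_2 \le \cdots$ be the successive stage-boundary times, the conditional expected increment $\E[\tau_{j+1}-\tau_j \mid X(\tau_j)]$ is at most $\cT_{2\text{-}vote}$ by the previous paragraph (the bound in Lemma \ref{T2} being uniform over initial configurations, so it applies with the random configuration $X(\tau_j)$ in place of a fixed one), and there are at most $k-1$ stages. Summing gives total expected completion time at most $(k-1)\cT_{2\text{-}vote} = O(k\cdot \cT_{2\text{-}vote})$. The main obstacle, and the point requiring care, is the legitimacy of substituting the random intermediate configuration into the worst-case bound of Lemma \ref{T2}: one must observe that because that lemma is stated as a worst-case bound over \emph{all} initial configurations, the strong Markov property lets us restart the analysis at each stage boundary with whatever (possibly adversarially bad) configuration has arisen, so no independence or conditioning subtleties actually arise — it is purely a matter of stating the reduction cleanly.
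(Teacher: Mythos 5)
Your proposal is correct and is essentially the argument the paper intends: the corollary follows by iterating Lemma \ref{T2} over at most $k-1$ stages, each ending when an extreme opinion is eliminated, with the worst-case (configuration-uniform) bound of Lemma \ref{T2} applied at each stage boundary and summed to give $O(k\cdot \cT_{2-vote})$. Your attention to the tower-property/strong-Markov justification for restarting at random stage configurations is exactly the point that makes the uniform worst-case bound usable, and matches the paper's (implicit) reasoning.
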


\section{Asynchronous DIV on expanders}
In this section, we prove our main theorem, \cref{thm:expander_gen},
which we restate here.
\begin{theorem}[Restatement of \cref{thm:expander_gen}]
\TheoremExpander
\end{theorem}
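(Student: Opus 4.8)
## Proof Proposal

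The plan is to reduce the range of opinions one extreme value at a time, interleaving two complementary tools: the expander mixing lemma (via Lemma~\ref{lem:extremal_expander}) to \emph{shrink} the stationary measure of one extreme opinion down to a threshold $\epsilon$, and the linear voting coupling (Lemma~\ref{lem:pullupper}) to then \emph{eliminate} that opinion once its measure is small enough. Each of these succeeds with probability $1/2$, so a single ``stage'' (removing one extreme opinion) succeeds with probability at least $1/4$ on each attempt, and by repeating we get geometric decay in the failure probability. The target is to show the whole reduction to two adjacent opinions takes $T = o(n^2)$ steps w.h.p., with expected value as in~\eqref{T-val}.

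First I would set up the stage-by-stage accounting. Suppose at the start of a stage opinions lie in $\{a, a+1, \dots, b\}$ with $r = b - a + 1 \ge 3$ distinct values still present (really only the support matters, but the extremes $a$ and $b$ are what we attack). Apply Lemma~\ref{lem:extremal_expander}: using the expander mixing lemma and the fact that the update probability~\eqref{avtrans} is $\tfrac1n$ times a random-walk transition probability, the stationary measure of (say) the extreme opinion $a$ drops below a threshold $\epsilon$ within $O(n\log(1/\epsilon))$ steps with probability $\ge 1/2$, provided $\epsilon \gtrsim \lambda^2$ and at least $4$ opinions remain (a separate, slightly different bound handles the $r=3$ case, where $\epsilon$ can be pushed down to roughly $\lambda$ rather than $\lambda^2$, costing the $n^{5/3}\log n$ and $n^2\sqrt\lambda$ terms). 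Conditioned on that success, invoke Lemma~\ref{lem:pullupper}: since opinion $a$ now has stationary measure $\le \epsilon$, with probability $\ge 1/2$ it disappears within the time $T(\epsilon)$ specified there. If either step fails we are (at worst) back at a state with the same support, and retry; the number of retries is stochastically dominated by a geometric random variable with success parameter $1/4$, so the expected cost per stage is $O(1)$ times the per-attempt cost, and the tail is exponential — this is the content of \eqref{eq:exponential decay}. Summing the per-stage times over the at most $k$ stages, plus the extra cost of the final $r \le 3 \to 2$ step, gives $\E[T] = O(kn\log n + n^2\lambda k + n^{5/3}\log n + n^2\sqrt\lambda)$ as in~\eqref{T-val}; since $\lambda k = o(1)$ and $k = o(n/\log n)$, each term is $o(n^2)$, so $\E[T] = o(n^2)$, and the exponential tail bound upgrades this to $T = o(n^2)$ w.h.p.

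The one subtlety that needs care — and the step I expect to be the main obstacle — is that Lemma~\ref{lem:pullupper} is a statement about the \emph{pull voting} / linear voting model, whereas DIV is not pull voting when three or more opinions are present. The resolution is the observation (to be made precise in Section~\ref{couple}) that the two extreme opinions in DIV behave monotonically: the extreme opinion $a$ can only be \emph{lost} by a vertex (when it hears a larger neighbour) and can only be \emph{gained} by a vertex currently holding $a+1$; one then couples the indicator set $A_a(t)$ with a two-opinion pull-voting process on the ``is my opinion $= a$ or $> a$'' partition, showing the DIV dynamics of $A_a$ are dominated (in the appropriate stochastic sense) by pull voting, so the elimination-time bound of Lemma~\ref{lem:pullupper} transfers. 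I would state this coupling carefully, check that the domination goes the right way (DIV eliminates the extreme at least as fast, because intermediate opinions act as a one-way buffer that only helps), and confirm the threshold $\epsilon$ fed in from Lemma~\ref{lem:extremal_expander} matches the hypothesis of Lemma~\ref{lem:pullupper}. Everything else — the geometric retry bookkeeping, the sum over stages, and reading off~\eqref{T-val} — is routine. Finally, $T = o(n^2)$ w.h.p. is exactly what Lemma~\ref{ThA}(iii) needs, closing the loop with Theorem~\ref{thm:expander1}; but that conclusion is outside the scope of this theorem's proof.
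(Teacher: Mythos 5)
Your proposal is correct and follows essentially the same route as the paper: per stage, shrink an extreme opinion's stationary measure to a threshold ($\epsilon\gtrsim\lambda^2$ when at least four opinions remain, $\epsilon\gtrsim\lambda$ in the three-opinion case) via the expander mixing argument of Lemma~\ref{lem:extremal_expander}, then eliminate it via the coupling with two-opinion pull voting (Lemmas~\ref{lem:domination} and~\ref{lem:extreme_disapper_exp}) feeding into Lemma~\ref{lem:pullupper}, with probability $1/4$ per attempt, geometric retries, summation over the $k$ stages to get \eqref{T-val}, and a Markov-plus-restart tail bound to upgrade $\E[T]=o(n^2)$ to the w.h.p.\ statement. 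The coupling subtlety you flag is exactly how the paper handles it, so there is no gap.
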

The proof goes in three stages. In the first part in Section \ref{reduce} we use the expander mixing lemma 
to estimate a time when at least one of the extreme opinions has small weight with positive probability. Essentially this depends
on exploiting the similarity of \eqref{avtrans} to the transition probability of a random walk.
In the second part in Section \ref{couple}
we couple the process with two-opinion voting; adapting a result from two-opinion voting  
which estimates a time when an opinion with small weight vanishes with positive probability.
In Section \ref{sec:Proof of main thm} we assemble these results to prove Theorem \ref{thm:expander_gen}.

\subsection{Reduction in size of the extreme opinions in DIV}\label{reduce}
Let $P$ be a transition matrix  of a simple random walk on $G$ and $\pi$ be its stationary distribution.
We assume $P$ is aperiodic, irreducible ($\forall v,u\in V$, $\exists t\geq 0$ s.t.~$P^t(v,u)>0$) and reversible ($\forall v,u\in V$, $\pi_vP(v,u)=\pi_uP(u,v)$).
Let $P(v,S)=\sum_{u\in S}P(v,u)$, $\pi(S)=\sum_{v\in S}\pi_v$ and $Q(U,S)=\sum_{v\in U}\pi_vP(v,S)$ for $S,U\subseteq V$.

Let $1=\lambda_1\geq \lambda_2\geq \cdots \geq \lambda_n$ be eigenvalues of $P$.
Let $\lambda=\max\{|\lambda_2|,|\lambda_n|\}$ be the second largest eigenvalue in absolute value.
Let $S^C=V\sm S$ denote the complement of a set of vertices $S$.
We use the following version of the expander mixing lemma.
See, e.g., the inequality below (12.9) in p.163 of~\cite{LP17}.
\begin{lemma}
\label{lem:EML}
Suppose $P$ is irreducible and reversible.
Then, for any $S,U\subseteq V$,
\begin{equation}\nonumber
    |Q(S,U)-\pi(S)\pi(U)|\leq \lambda\sqrt{\pi(S)\pi(S^C)\pi(U)\pi(U^C)}.
\end{equation}
\end{lemma}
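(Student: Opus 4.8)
The plan is to prove this by a standard spectral-decomposition argument in the weighted space $\ell^2(\pi)$, i.e.\ $\mathbb{R}^V$ equipped with the inner product $\langle f,g\rangle_\pi=\sum_{v\in V}\pi_v f(v)g(v)$. Since $\pi>0$ by irreducibility, this is a genuine (positive-definite) inner product, and reversibility is precisely the assertion that $P$ is self-adjoint on it: $\langle f,Pg\rangle_\pi=\sum_{v,u}\pi_v f(v)P(v,u)g(u)=\sum_{v,u}\pi_u P(u,v)f(v)g(u)=\langle Pf,g\rangle_\pi$. Hence $P$ has an orthonormal eigenbasis $f_1,\dots,f_n$ with real eigenvalues $1=\lambda_1\ge\lambda_2\ge\cdots\ge\lambda_n\ge-1$; because $\sum_{v}\pi_v=1$, the constant vector $\mathbbm{1}$ satisfies $\|\mathbbm{1}\|_\pi=1$, so we may take $f_1=\mathbbm{1}$. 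Note that for every $2\le i\le n$ one has $|\lambda_i|\le\max\{|\lambda_2|,|\lambda_n|\}=\lambda$.

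Next I would recognise $Q(S,U)$ as a bilinear form in the indicator vectors $\mathbbm{1}_S,\mathbbm{1}_U$. Directly from the definitions, $Q(S,U)=\sum_{v\in S}\pi_v P(v,U)=\sum_{v\in S}\pi_v\sum_{u\in U}P(v,u)=\langle\mathbbm{1}_S,P\mathbbm{1}_U\rangle_\pi$. Writing $\mathbbm{1}_S=\sum_i a_i f_i$ and $\mathbbm{1}_U=\sum_i b_i f_i$ with $a_i=\langle\mathbbm{1}_S,f_i\rangle_\pi$ and $b_i=\langle\mathbbm{1}_U,f_i\rangle_\pi$, the choice $f_1=\mathbbm{1}$ gives $a_1=\langle\mathbbm{1}_S,\mathbbm{1}\rangle_\pi=\pi(S)$ and $b_1=\pi(U)$. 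By self-adjointness and orthonormality, $Q(S,U)=\sum_i\lambda_i a_i b_i=\pi(S)\pi(U)+\sum_{i\ge2}\lambda_i a_i b_i$, so the left-hand side of the lemma is exactly $\bigl|\sum_{i\ge2}\lambda_i a_i b_i\bigr|$.

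Finally I would bound this error term. Using $|\lambda_i|\le\lambda$ for $i\ge2$ and the Cauchy--Schwarz inequality, $\bigl|\sum_{i\ge2}\lambda_i a_i b_i\bigr|\le\lambda\sum_{i\ge2}|a_i|\,|b_i|\le\lambda\bigl(\sum_{i\ge2}a_i^2\bigr)^{1/2}\bigl(\sum_{i\ge2}b_i^2\bigr)^{1/2}$. By Parseval's identity, $\sum_{i\ge2}a_i^2=\|\mathbbm{1}_S\|_\pi^2-a_1^2=\pi(S)-\pi(S)^2=\pi(S)\pi(S^C)$, and likewise $\sum_{i\ge2}b_i^2=\pi(U)\pi(U^C)$, which yields the claimed inequality. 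There is no real obstacle here; the only points that need care are using the $\pi$-weighted inner product (so that $P$ itself, rather than a conjugated symmetrisation of it, is the self-adjoint operator) and the normalisation $\|\mathbbm{1}\|_\pi=1$, which is what makes $a_1$ and $b_1$ come out as exactly $\pi(S)$ and $\pi(U)$. Alternatively one could simply quote the cited inequality below (12.9) on p.\,163 of~\cite{LP17}, but the short spectral argument above is self-contained.
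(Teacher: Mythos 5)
Your proof is correct. The paper does not prove this lemma at all — it simply cites the inequality below (12.9) in Levin and Peres \cite{LP17} — and your self-contained argument (expanding $\mathbbm{1}_S$ and $\mathbbm{1}_U$ in the orthonormal eigenbasis of $P$ on $\ell^2(\pi)$, isolating the $f_1=\mathbbm{1}$ term, and applying Cauchy--Schwarz with Parseval) is exactly the standard spectral proof underlying that citation, so it matches the intended argument rather than deviating from it.
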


The first step is to show that one of the extreme opinions gets sufficiently small with constant probability.
Consider  discrete incremental voting with opinions in  $[k]$.
Recall that
$X_v(t)\in [k]$ is the  opinion of $v$ at time $t$ and
$A_i(t)\subseteq V$ is the set of vertices holding the opinion $i\in [k]$.
Let
\[
s=\min_{v\in V}X_v(0) \textrm{ and } \ell=\min_{v\in V}X_v(0)
\]
be the smallest and largest opinions in the initial round.
Let
\begin{equation} \label{taudiv}
\tau^{\mathrm{DIV}}_{\mathrm{extr}}(\epsilon)=\min\{t\geq 0:\min\{\pi(A_s(t)),\pi(A_\ell(t))\}\leq \epsilon\}
\end{equation}
be the time when the weight of one of the extreme opinions gets smaller than $\epsilon$.
We present the following bounds for $\tau^{\mathrm{DIV}}_{\mathrm{extr}}(\epsilon)$ for
the two cases $\ell\geq s+3$ and $\ell= s+2$, which we treat separately.

The parameter $\eta$ in the next lemma is a failure probability in our calculations.
The specific values of $\eta$, $\epsilon_1$, and $\epsilon_2$ will be chosen in the proof of \cref{thm:expander_gen} in \cref{sec:Proof of main thm}.


\begin{lemma}
\label{lem:extremal_expander}
We have the following:
\begin{enumerate}[(i)]
    \item \label{item:extremalge4}
    Suppose $\ell\geq s+3$.
    Then, for any $\epsilon_1\geq 4\lambda^2$ and $\eta>0$, $\Pr[\tau^{\mathrm{DIV}}_{\mathrm{extr}}(\epsilon_1)>T_1]\leq \eta$ holds for
    $T_1=\left \lceil 2n\log\left(\frac{1}{4\epsilon_1^2\eta}\right)\right \rceil$. 
    \item \label{item:extremaleq3}
    Suppose $\ell=s+2$.
    Then, for any $\epsilon_2\geq 2\lambda$ and $\eta>0$,
    $\Pr[\tau^{\mathrm{DIV}}_{\mathrm{extr}}(\epsilon_2)>T_2]\leq \eta$
    holds for $T_2=\left\lceil \frac{2n}{\epsilon_2}\log \left(\frac{1}{4\epsilon_2^2\eta}\right)\right\rceil$.
\end{enumerate}
\end{lemma}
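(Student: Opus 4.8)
The plan is to track the stationary weight of one extreme opinion as it evolves, and to show that in expectation it contracts geometrically, so that after a suitable number of steps it must fall below $\epsilon$ with high probability. I will focus on the smallest opinion $s$; the largest opinion $\ell$ is symmetric, and we only need \emph{one} of them to get small, so bounding either suffices. Fix a step $t$ and condition on $X(t)$. The only way a vertex can acquire or shed opinion $s$ at step $t$ is through the update rule \eqref{Tab}: a vertex $v$ currently holding $s$ leaves $A_s$ exactly when the chosen neighbour $w$ has opinion $>s$ (which, since $s$ is the minimum, means any neighbour outside $A_s$), and a vertex $v$ holding $s+1$ enters $A_s$ exactly when $w\in A_s$. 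Crucially, when $\ell\ge s+3$ there is \emph{no} vertex of opinion $s-1$, so the set $A_s$ cannot gain members from below, and the only inflow is from $A_{s+1}$. I would write, using the vertex-process transition \eqref{avtrans}, the expected one-step change in $\pi(A_s(t))$ as a difference of an outflow term proportional to $Q(A_s, A_s^C)$ (weighted by $1/n$ times stationary measures, since $\pi_v P(v,u)/\,\text{something}$ matches $Q$) and an inflow term proportional to $Q(A_{s+1}, A_s)$.

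The key step is then the expander mixing lemma, Lemma \ref{lem:EML}. Write $a=\pi(A_s(t))$. The outflow $Q(A_s,A_s^C)$ is at least $\pi(A_s)\pi(A_s^C)-\lambda\sqrt{\pi(A_s)\pi(A_s^C)} \ge a(1-a) - \lambda\sqrt{a}$, which for small $a$ is $\approx a$ provided $a \gg \lambda^2$ (this is exactly where the hypothesis $\epsilon_1\ge 4\lambda^2$ enters). The inflow $Q(A_{s+1},A_s)\le \pi(A_{s+1})\pi(A_s)+\lambda\sqrt{\pi(A_{s+1})\pi(A_s)} \le a + \lambda\sqrt{a}$, which is also $O(a)$. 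The point is that the \emph{net} expected change in $a$ is $-(c/n)\,a$ for some constant $c>0$ once $a$ exceeds the threshold $4\lambda^2$ — the outflow beats the inflow by a constant factor because the outflow has the ``free'' $\pi(A_s^C)\sim 1$ factor while the inflow is damped by $\pi(A_{s+1})\le 1-a$ and, more importantly, because the inflow into $A_s$ from $A_{s+1}$ is matched against outflow from $A_s$ into all of $A_s^C\supseteq A_{s+1}$, so a careful accounting of the $A_s\leftrightarrow A_{s+1}$ edge gives cancellation and the residual outflow to $A_s^C\setminus A_{s+1}$ is what drives the decrease. Hence $\E[\pi(A_s(t+1))\mid X(t)]\le (1-\Omega(1/n))\,\pi(A_s(t))$ as long as $\pi(A_s(t))>\epsilon_1$. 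Iterating, $\E[\pi(A_s(t\wedge\tau))]\le (1-\Omega(1/n))^t \pi(A_s(0)) \le (1-\Omega(1/n))^t$, and choosing $t=T_1=\lceil 2n\log(1/(4\epsilon_1^2\eta))\rceil$ makes this at most $\epsilon_1^2\eta$ (roughly). Then Markov's inequality on $\pi(A_s(T_1))\ind{\tau^{\mathrm{DIV}}_{\mathrm{extr}}(\epsilon_1)>T_1}\ge \epsilon_1$ gives the failure probability $\le \epsilon_1\eta/\,(\text{something})\le\eta$, after being slightly careful with the stopped process and the exact constants. Part (ii), the case $\ell=s+2$, is the same argument but now there \emph{are} vertices of opinion $s+2=\ell$; still no opinion below $s$, so $A_s$ only gains from $A_{s+1}$, and the contraction factor degrades to $(1-\Omega(\epsilon_2/n))$ because the relevant measure being contracted against is now of order $\epsilon_2$ rather than order $1$ — this is why $T_2$ carries an extra $1/\epsilon_2$ and needs $\epsilon_2\ge 2\lambda$ instead of $\epsilon_1\ge 4\lambda^2$.

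The main obstacle I anticipate is the bookkeeping of the net drift: one must be careful that the inflow from $A_{s+1}$ into $A_s$ does not cancel the outflow entirely. The resolution is that outflow from $A_s$ goes to \emph{all} of $A_s^C$, and the $Q(A_s,A_{s+1})$ piece exactly cancels against part of the inflow $Q(A_{s+1},A_s)$ (by reversibility these two are related), leaving a genuine net loss proportional to $Q(A_s, A_s^C\setminus A_{s+1})$ — and in the case $\ell\ge s+3$ this set $A_s^C\setminus A_{s+1}$ is nonempty and carries stationary weight bounded below, so the expander mixing lemma gives $Q(A_s, A_s^C\setminus A_{s+1}) = \Omega(a)$ whenever $a\gg\lambda^2$. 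A secondary technical point is handling the stopping time $\tau^{\mathrm{DIV}}_{\mathrm{extr}}$ cleanly: I would work with the stopped supermartingale $\pi(A_s(t\wedge\tau))$ (or rather show $\pi(A_s(t\wedge\tau))\cdot(1-c/n)^{-(t\wedge\tau)}$ is a supermartingale) so that the geometric-decay bound survives conditioning, and then the final probability bound is a one-line Markov inequality.
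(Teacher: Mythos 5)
There is a genuine gap, and it is exactly at the step you flagged as the "key accounting": your claim that the residual set $A_s^C\setminus A_{s+1}=\bigcup_{j\ge s+2}A_j$ "carries stationary weight bounded below" is false, and with it the claimed one-step contraction $\E[\pi(A_s(t+1))\mid X(t)]\le (1-\Omega(1/n))\pi(A_s(t))$ whenever $\pi(A_s(t))>\epsilon_1$. The drift of the single coordinate $\pi(A_s)$ is (up to $\lambda$-error terms) $-\tfrac1n\,\pi(A_s)\,\pi\bigl(A_s^C\setminus A_{s+1}\bigr)$, because the outflow $Q(A_s,A_{s+1})$ is cancelled by the inflow $Q(A_{s+1},A_s)$ (detailed balance). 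Consider a configuration with $\pi(A_s)\approx\pi(A_{s+1})\approx 1/2$ and only a sliver of mass, say slightly more than $\epsilon_1$ (which in the application can be as small as $\max\{4\lambda^2,n^{-2}\}$), on opinions $\ge s+2$: then $\tau^{\mathrm{DIV}}_{\mathrm{extr}}(\epsilon_1)$ has not yet occurred, yet $\pi(A_s)$ contracts only at rate $O(\epsilon_1/n)$ per step, far too slowly to reach $\epsilon_1$ within $T_1=O\bigl(n\log\tfrac{1}{4\epsilon_1^2\eta}\bigr)$. Fixing attention on the other extreme in advance does not help either, since which of $\pi(A_s)$, $\pi(A_\ell)$ actually decays depends on the current configuration and can switch over time; "bounding either one" for a fixed choice does not bound the minimum. (A secondary issue: even where your drift is usable, beating the inflow error $\lambda\sqrt{\pi(A_s)}$ requires a lower bound on $\pi(A_s^C\setminus A_{s+1})$ that $\epsilon_1\ge 4\lambda^2$ alone does not supply.)

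The paper's proof avoids this by taking as Lyapunov function the \emph{product} $Y_t=\pi(A_s(t))\pi(A_\ell(t))$ rather than a single coordinate. An exact one-step computation gives $\E[Y_{t+1}\mid X(t)]=Y_t\bigl[1+\tfrac1n\bigl(\tfrac{Q(A_s,A_s\cup A_{s+1})}{\pi(A_s)}+\tfrac{Q(A_\ell,A_{\ell-1}\cup A_\ell)}{\pi(A_\ell)}-2\bigr)\bigr]$, and when $\ell\ge s+3$ the sets $A_s\cup A_{s+1}$ and $A_{\ell-1}\cup A_\ell$ are disjoint, so Lemma \ref{lem:EML} yields the factor $1-\tfrac1{2n}$ as long as both $\pi(A_s),\pi(A_\ell)\ge 4\lambda^2$; this is precisely how the bad configuration above is handled (there the $\ell$-side term is small, so the product still drops even though $\pi(A_s)$ barely moves). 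When $\ell=s+2$ the disjointness fails and the rate degrades to $\epsilon_2/(2n)$, which is where $T_2$ and the threshold $2\lambda$ come from. Your supermartingale/stopping-time/Markov wrap-up matches the paper's, but the contraction it is applied to must be for the product (or some equivalent joint quantity), not for a single pre-chosen extreme opinion.
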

\begin{proof}
Write $A_i=A_i(t)$ and $A_i'=A_i(t+1)$ for convenience.
First, in both cases (that is, when $\ell \ge s+2$) from the definitions, we have
\begin{align*}
    \Pr[v\in A_s'\textrm{ and }u\in A_\ell']
    &=\begin{cases}
        1-\frac{2}{n}+\frac{P(v,A_s)}{n}+\frac{P(u,A_\ell)}{n} & (\textrm{if } v\in A_{s}\textrm{ and }u\in A_{\ell})\\
        \frac{P(v,A_s)}{n}& (\textrm{if } v\in A_{s+1}\textrm{ and }u\in A_{\ell})\\
        \frac{P(u,A_\ell)}{n}& (\textrm{if } v\in A_{s}\textrm{ and }u\in A_{\ell-1})\\
        0 & (\textrm{otherwise})
    \end{cases}.
\end{align*}
Hence, we have
\begin{align}
    &\E\left[\pi(A_{s}')\pi(A_\ell')\right] \nonumber\\
    &=\sum_{v\in V}\sum_{u\in V}\pi_v\pi_u\Pr\left[v\in A_s',u\in A_\ell'\right]\nonumber\\
    &=\pi(A_s)\pi(A_\ell)\left(1-\frac{2}{n}\right)+\pi(A_\ell)\frac{Q(A_s,A_s)}{n}+\pi(A_s)\frac{Q(A_\ell,A_\ell)}{n} \nonumber\\
    &\hspace{1em} +\pi(A_\ell)\frac{Q(A_{s+1},A_s)}{n}+\pi(A_s)\frac{Q(A_{\ell-1},A_\ell)}{n}\nonumber\\
    &=\pi(A_s)\pi(A_\ell)\left[1+\frac{1}{n}\left(\frac{Q(A_s,A_s\cup A_{s+1})}{\pi(A_s)}+\frac{Q(A_\ell,A_{\ell-1}\cup A_\ell)}{\pi(A_\ell)}-2\right)\right] \label{piAsAl}
\end{align}
Note that we used detailed balance, i.e., $Q(S,U)=Q(U,S)$ holds for any $S,U\subseteq V$,  to evaluate the summations in the above.

{\it Proof of (\ref{item:extremalge4}) of \cref{lem:extremal_expander}.}
From \cref{lem:EML}, we have for any $S,U\subseteq V$
\begin{align*}
    Q(S,U)
    &\leq \pi(S)\pi(U)+\lambda\sqrt{\pi(S)\pi(S^C)\pi(U)\pi(U^C)}
    \leq \pi(S)\pi(U)+\frac{\lambda \sqrt{\pi(S)}}{2}.
\end{align*}
Hence,
\begin{align}
    &\E\left[\pi(A_{s}')\pi(A_\ell')\right] \nonumber \\
    &\leq \pi(A_s)\pi(A_\ell)\left[1+\frac{1}{n}\left(\pi(A_s\cup A_{s+1})+\frac{\lambda}{2\sqrt{\pi(A_s)}}+\pi(A_{\ell-1}\cup A_\ell)+\frac{\lambda}{2\sqrt{\pi(A_\ell)}}-2\right)\right] \label{eq:asynckey1}\\
    &\leq \pi(A_s)\pi(A_\ell)\left[1+\frac{1}{n}\left(\frac{\lambda}{2\sqrt{\pi(A_s)}}+\frac{\lambda}{2\sqrt{\pi(A_\ell)}}-1\right)\right]. \label{eq:asyncexp1}
\end{align}
Note that $\pi(A_s\cup A_{s+1})+\pi(A_{\ell-1}\cup A_\ell)\leq 1$ holds since $s+1<\ell-1$.
Write $\tau=\tau^{\mathrm{DIV}}_{\mathrm{extr}}(\epsilon_1)$ for convenience.
Then,
\begin{align}
    &\mathbbm{1}_{\tau>t-1}\E\left[\pi(A_{s}(t))\pi(A_\ell(t))\mid X(t-1)\right] \nonumber \\
    &\leq \mathbbm{1}_{\tau>t-1}\pi(A_s(t-1))\pi(A_\ell(t-1))\left[1+\frac{1}{n}\left(\frac{\lambda}{2\sqrt{\pi(A_s(t-1))}}+\frac{\lambda}{2\sqrt{\pi(A_\ell(t-1))}}-1\right)\right] \nonumber \\
    &\leq \mathbbm{1}_{\tau>t-1}\pi(A_s(t-1))\pi(A_\ell(t-1))\left(1-\frac{1}{2n}\right) \label{eq:asynckey2}
\end{align}
holds for any $t\geq 1$.
Note that the definition of $\tau^{\mathrm{DIV}}_{\mathrm{extr}}(\epsilon_1)$ in \cref{taudiv} with $\eps_1\geq 4\l^2$ implies that both $\pi(A_{s}(t-1))\geq 4\lambda^2$ and $\pi(A_{\ell}(t-1))\geq 4\lambda^2$ for $\tau>t-1$.
Intuitively, \cref{eq:asynckey2} implies that $\pi(A_{s}(t))\pi(A_\ell(t))$ decreases by a factor of $1-1/2n$ at each time step until $\tau$ arrives.
To describe this intuition formally, let
\begin{align*}
    Y_t=\pi(A_s(t))\pi(A_\ell(t)), \;\;
    r=1-\frac{1}{2n}, \;\;
    Z_t=r^{-t}Y_t, \;\;
    \textrm{and } W_t=Z_{\tau\wedge t}.
\end{align*}
Then, from \cref{eq:asynckey2}, i.e., $\mathbbm{1}_{\tau>t-1}\E\left[Y_t\mid X(t-1)\right]\leq \mathbbm{1}_{\tau>t-1}\,rY_{t-1}$,
\begin{align*}
    \E\left[W_t-W_{t-1}\mid X(t-1)\right]
    &=\mathbbm{1}_{\tau>t-1}\,\E\left[Z_t-Z_{t-1}\mid X(t-1)\right]\\
    &=\mathbbm{1}_{\tau>t-1}\,r^{-t}\left(\E\left[Y_t\mid X(t-1)\right]-rY_{t-1}\right)\\
    &\leq 0
\end{align*}
holds. In other words, $(W_t)_{t=0,1,2,...}$ is a supermartingale.
Hence, we have
\begin{align}
    \E[W_T]\leq \E[W_0]=Y_0=\pi(A_s(0))\pi(A_\ell(0))\leq 1/4 \label{eq:asyncexp2}
\end{align}
and
\begin{align}
    \E[W_T]
    &\geq \E\left[W_T\mid \tau>T\right]\Pr[\tau>T]\nonumber\\
    &=\E\left[Z_T\mid \tau>T\right]\Pr[\tau>T]\nonumber\\
    &=r^{-T}\E\left[\pi(A_s(T))\pi(A_\ell(T))\mid \tau>T\right]\Pr[\tau>T]\nonumber\\
    &\geq r^{-T}\epsilon_1^2\Pr[\tau>T]. \label{eq:asyncexp3}
\end{align}
Note that $\pi(A_s(T))\geq \epsilon_1$ and $\pi(A_s(T))\geq \epsilon_1$ for  $\tau>T$.
Taking $T=\left \lceil 2n\log\left(\frac{1}{4\epsilon_1^2\eta}\right)\right \rceil$ $=O(n\log (\epsilon_1\eta)^{-1})$, \cref{eq:asyncexp2,eq:asyncexp3} give
\begin{align*}
    \Pr[\tau>T]\leq \frac{r^{T}}{4\epsilon_1^2}\leq \frac{1}{4\epsilon_1^2}\exp\left(-\frac{T}{2n}\right)\leq \eta.
\end{align*}

\noindent
{\it Proof of (\ref{item:extremaleq3}) of \cref{lem:extremal_expander}.}
Without loss of generality, we assume that $s=1$, i.e., $X_v(0)\in \{1,2,3\}$ holds for all $v\in V$.
    From \cref{lem:EML}, we have
$
    Q(S,U)
    \leq \pi(S)\pi(U)+\lambda\sqrt{\pi(S)\pi(U^C)},
$
so
$Q(A_1,A_1\cup A_2)/\pi(A_1)$ $\leq \pi(A_1\cup A_2)+\lambda\sqrt{\pi(A_3)/\pi(A_1)},
$
with an analogous bound for $Q(A_3,A_2\cup A_3)/\pi(A_3)$.
    Then continuing~\eqref{piAsAl} similarly as in~\eqref{eq:asynckey1}, we have
    \begin{align}
    &\E\left[\pi(A_{1}')\pi(A_3')\right] \nonumber \\
    &\leq \pi(A_1)\pi(A_3)\left[1+\frac{1}{n}\left(\pi(A_1\cup A_{2})+\lambda\sqrt{\frac{\pi(A_3)}{\pi(A_1)}}
    +\pi(A_{2}\cup A_3)+{\lambda}\sqrt{\frac{\pi(A_1)}{\pi(A_3)}}-2\right)\right] \nonumber\\
    &=\pi(A_1)\pi(A_3)\left[1+\frac{1}{n}\left(\sqrt{\pi(A_1)\pi(A_3)}\left(\frac{\lambda}{{\pi(A_1)}}+\frac{\lambda}{{\pi(A_3)}}\right)
    -\left(\pi(A_1)+\pi(A_3)\right)\right)\right]\label{eq:threeleft1}
    \end{align}
Note that if both $\pi(A_1)$ and $\pi(A_3)$ are at least $2\lambda$, then
\[
\sqrt{\pi(A_1)\pi(A_3)}\left(\frac{\lambda}{{\pi(A_1)}}+\frac{\lambda}{{\pi(A_3)}}\right)\leq \sqrt{\pi(A_1)\pi(A_3)}
\le \frac{\pi(A_1)+\pi(A_3)}{2}.
\]

Write $\tau=\tau^{\mathrm{DIV}}_{\mathrm{extr}}(\epsilon_2)$ for convenience.
    Since both $\pi(A_{1}(t-1))\geq \epsilon_2\geq 2\lambda$ and $\pi(A_{3}(t-1))\geq \epsilon_2\geq 2\lambda$ holds for $\tau>t-1$,
    \cref{eq:threeleft1} implies
    \begin{align}
        &\mathbbm{1}_{\tau>t-1}\E\left[\pi(A_{1}(t))\pi(A_3(t))\mid X(t-1)\right] \nonumber \\
        &\leq \mathbbm{1}_{\tau>t-1} \pi(A_{1}(t-1))\pi(A_3(t-1)) \left(1-\frac{\pi(A_1(t-1))+\pi(A_3(t-1))}{2n}\right) \nonumber \\
        &\leq \mathbbm{1}_{\tau>t-1} \pi(A_{1}(t-1))\pi(A_3(t-1)) \left(1-\frac{\epsilon_2}{2n}\right). \label{eq:threeleft2}
    \end{align}
    The last inequality follows from $\pi(A_{1}(t-1))\geq \epsilon_2$ and $\pi(A_{3}(t-1))\geq \epsilon_2$.
    Now, let
    \begin{align*}
        Y_t=\pi(A_{1}(t))\pi(A_3(t)), \,
        r=1-\frac{\epsilon_2}{2n}, \,
        Z_t=r^{-t}Y_t, \textrm{ and }
        W_t=Z_{\tau \wedge t}.
    \end{align*}
    From \cref{eq:threeleft2},
    \begin{align*}
        \E[W_t-W_{t-1}\mid X(t-1)]
        &=\mathbbm{1}_{\tau>t-1}\E[Z_t-Z_{t-1}\mid X(t-1)]\\
        &=\mathbbm{1}_{\tau>t-1}r^{-t}\left(\E[Y_t\mid X(t-1)]-rY_{t-1}\right) \\
        &\leq 0,
    \end{align*}
    i.e., $(W_t)_{t=0,1,2...}$ is a supermartingale.
    Hence, we have
    \begin{align}
        \E[W_T]\leq \E[W_0]=\pi(A_1(0))\pi(A_3(0))\leq 1/4. \label{eq:threeleftmar}
    \end{align}
    Furthermore,
    \begin{align}
        \E[W_T]&\geq \E[W_T\mid \tau>T]\Pr[\tau>T] \nonumber\\
        &=\E[Z_T\mid \tau>T]\Pr[\tau>T] \nonumber\\
        &=r^{-T}\E[\pi(A_1(T))\pi(A_3(T))\mid \tau>T]\Pr[\tau>T] \nonumber\\
        &\geq r^{-T}\epsilon_2^2\Pr[\tau>T] \label{eq:threeleft3}
    \end{align}
    holds. Combining \cref{eq:threeleftmar,eq:threeleft3} with $T=\left\lceil \frac{2n}{\epsilon_2}\log \left(\frac{1}{4\epsilon_2^2\eta}\right)\right\rceil$, 
    we obtain
    \begin{align*}
        \Pr[\tau>T] \leq \frac{r^{T}}{4\epsilon_2^2} \leq
        \frac{1}{4\epsilon_2^2}\exp\left(-\frac{\epsilon_2T}{2n}\right) \leq\eta.
    \end{align*}
\end{proof}

\subsection{Coupling DIV with pull voting}\label{couple}
Using \cref{lem:extremal_expander}, we  specify a time $T$, such that within $T$ steps, one of the extreme opinions disappears with a constant probability.
The main idea here is a coupling with pull voting.
Henceforth, we consider  two-opinion pull voting with opinions $\{1,2\}$. 
Let $B(t)\subseteq V$ be the set of vertices holding  opinion $1$.
Let
\[
\tau_{\mathrm{cons}}^{\mathrm{PULL}}=\min\{t\geq 0:B(t)=\emptyset  \textrm{ or } B(t)=V\}
\]
be the consensus time of two-opinion pull voting.

The next lemma, which is from \cite{CR16}, gives a probability bound on $\tau_{\mathrm{cons}}^{\mathrm{PULL}}$
in terms of the measure of the smallest opinion.
\begin{lemma}[\cite{CR16}]
\label{lem:pullupper}
$\Pr\left[\tau_{\mathrm{cons}}^{\mathrm{PULL}}>T\right]\leq 1/2$ for $T=\frac{64n}{\sqrt{2}(1-\lambda)\pi_{\min}}\sqrt{\min\{\pi(B(0)),\pi(B(0)^C)\}}$.
\end{lemma}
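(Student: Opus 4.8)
The plan is to bound the expected consensus time of two-opinion pull voting and then apply Markov's inequality to get the stated probability bound with threshold $1/2$. First I would set up the standard potential function for two-opinion pull voting: for the set $B=B(t)$ of vertices holding opinion $1$, track the quantity $\Phi(B)=\pi(B)$, or more precisely work with the ``discrepancy'' or edge-boundary potential $\Psi(B)=Q(B,B^C)=\sum_{v\in B}\pi_v P(v,B^C)$, which measures the stationary-weighted probability mass that can flip at the next step. The key observations are: (a) $\Psi(B)=0$ exactly when $B=\emptyset$ or $B=V$, i.e.\ at consensus; (b) each pull-voting step changes $\pi(B)$ by at most $\|\pi\|_\infty$, and the probability of a change at a given step is proportional to $\Psi(B)$; and (c) a spectral/variance argument shows the squared potential $\pi(B)(1-\pi(B))$ (or an appropriate Dirichlet-form quantity) decreases in expectation at a rate governed by the spectral gap $1-\lambda$.

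The core estimate I would establish is a drift inequality of the form
\[
\E[\,\pi(B(t+1))(1-\pi(B(t+1)))\mid B(t)=B\,]\le \pi(B)(1-\pi(B)) - \frac{c(1-\lambda)}{n}\,\pi(B)(1-\pi(B))
\]
for some absolute constant $c>0$, using reversibility and the expander mixing lemma (Lemma~\ref{lem:EML}) to lower-bound the boundary mass $\Psi(B)$ in terms of $\pi(B)\pi(B^C)$ and the gap. Combined with the fact that the process is absorbed once $\pi(B)$ drops below $\pi_{\min}$ (a single vertex), this yields an expected hitting time of order $n/((1-\lambda)\pi_{\min})$ times a factor accounting for how small the initial mass $\min\{\pi(B(0)),\pi(B(0)^C)\}$ already is; the $\sqrt{\cdot}$ dependence in the statement comes from optimizing the quadratic potential against the linear step size, i.e.\ from the $\|\pi\|_\infty$-sized increments competing with the multiplicative contraction. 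Once $\E[\tau_{\mathrm{cons}}^{\mathrm{PULL}}]\le T/2$ is shown for the stated $T$, Markov's inequality gives $\Pr[\tau_{\mathrm{cons}}^{\mathrm{PULL}}>T]\le 1/2$.

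Since this lemma is quoted verbatim from~\cite{CR16}, the cleanest route is simply to cite the relevant theorem there and note that the constant $64/\sqrt2$ and the exact form of $T$ are extracted from that analysis; I would reproduce only the potential-function setup and the drift inequality for completeness. The main obstacle is getting the right power of the initial-minority-mass: naively one gets a bound linear in $\min\{\pi(B(0)),\pi(B(0)^C)\}^{-1}$ or independent of it, and obtaining the $\sqrt{\min\{\pi(B(0)),\pi(B(0)^C)\}}$ scaling requires carefully tracking that once the minority weight is small the boundary term $\Psi$ is correspondingly small, so the process ``coasts'' to absorption quickly — this is exactly the regime we will exploit when we feed in $\epsilon=\epsilon_1$ or $\epsilon_2$ from Lemma~\ref{lem:extremal_expander}. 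Reversibility (detailed balance $Q(S,U)=Q(U,S)$) and the symmetric role of $B$ and $B^C$ in pull voting are what make the potential behave well under the expander mixing lemma.
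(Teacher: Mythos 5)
Your fallback plan (cite \cite{CR16} and only supply the graph-specific estimate) is essentially what the paper does: it quotes expression (15) from the proof of Theorem 2 of \cite{CR16}, which already contains the $\sqrt{\min\{\pi(B(0)),\pi(B(0)^C)\}}$ factor and the constant $64/\sqrt 2$ as a bound of the form $\Pr[\tau^{\mathrm{PULL}}_{\mathrm{cons}}>\frac{64}{\sqrt 2\,\Psi}\sqrt{\min\{\pi(B(0)),\pi(B(0)^C)\}}]\le 1/2$, and then the only work is to lower-bound the linear-voting parameter $\Psi$. That part you do have in spirit: by detailed balance, $\E[|\pi(B(1))-\pi(B(0))|\mid B(0)=S]=\tfrac{2}{n}Q(S,S^C)\ge \tfrac{2(1-\lambda)}{n}\pi(S)\pi(S^C)\ge\tfrac{(1-\lambda)}{n}\min\{\pi(S),\pi(S^C)\}$ via Lemma~\ref{lem:EML}, giving $\Psi\ge \pi_{\min}(1-\lambda)/n$ and hence the stated $T$. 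Note the lemma is \emph{not} verbatim in \cite{CR16}; this $\Psi$-bound is the actual content of the paper's proof.

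Your primary self-contained route, however, has a genuine gap. First, the drift inequality you propose is wrong as written: since $\pi(B(t))$ is a martingale, the decrement of $\pi(B)(1-\pi(B))$ per step equals the conditional variance, which is $\tfrac1n\sum_{v\in B}\pi_v^2P(v,B^C)+\tfrac1n\sum_{v\in B^C}\pi_v^2P(v,B)$; the best multiplicative rate this yields is $\tfrac{c\,\pi_{\min}(1-\lambda)}{n}$, not $\tfrac{c(1-\lambda)}{n}$ — the $\pi_{\min}$ cannot be dropped. Second, and more fundamentally, a multiplicative-drift-plus-Markov argument on a quadratic potential gives a consensus-time bound of order $\tfrac{n}{\pi_{\min}(1-\lambda)}\log\bigl(\pi(B(0))\pi(B(0)^C)/\pi_{\min}\bigr)$, i.e.\ a \emph{logarithmic} dependence on the initial minority mass, not the $\sqrt{\min\{\pi(B(0)),\pi(B(0)^C)\}}$ scaling asserted in the lemma. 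That square-root dependence is exactly what the paper needs downstream (it is why $T_p(\epsilon)=T_p\sqrt\epsilon$ shrinks with $\epsilon$ in Lemma~\ref{lem:general}), and it comes from the martingale/optional-stopping analysis of $\pi(B(t))$ in \cite{CR16}, not from optimizing a contraction against step size. Your heuristic that a small minority mass makes the boundary small so the process ``coasts quickly'' is backwards: a small boundary slows the dynamics; the gain is that the martingale has a short distance to travel to absorption, which is what the stopping-time argument exploits. So either import expression (15) of \cite{CR16} as a black box and prove only the $\Psi$ bound (the paper's route), or be prepared to reproduce the martingale argument — the drift inequality alone will not deliver the stated $T$.
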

\begin{proof}
Let $B(0)=S$, and let
\begin{align*}
    \Psi&=\pi_{\min}\min_{S\subseteq V:S\neq \emptyset, V}\frac{\E[|\pi(B(1))-\pi(B(0))|\mid \! B(0)=S]}{\min\{\pi(S),1-\pi(S)\}},
\end{align*}
as in  \cite{CR16} expression (2).
It  follows from expression (15) in the proof of Theorem 2 of \cite{CR16} and the argument at the bottom of the same page that
\begin{align}\label{more-psi}
    \Pr\left[\tau_{\mathrm{cons}}^{\mathrm{PULL}}>\frac{64}{\sqrt{2}\Psi} \sqrt{\min\{\pi(B(0)),\pi(B(0)^C)\}}\right]\leq \frac{1}{2}.
\end{align}
See also Example 11  expressions (17), (18) of  \cite{CR16}  for more detail of the application of Theorem 2 to pull voting.

%

We next show that $\Psi\geq (\pi_{\min}(1-\lambda))/n$.
Consider the case where $v$ picks $u$.
Then, $|\pi(B(1))-\pi(B(0))|=\pi_v$ if $v$ and $u$ have different opinions,
and $|\pi(B(1))-\pi(B(0))|=0$ if $v$ and $u$ have the same opinion.
Hence, we have
\begin{align*}
\E[|\pi(B(1))-\pi(B(0))|\,\mid \!B(0)=S]
&=\sum_{v\in S}\pi_v\frac{1}{n}P(v,S^C)+\sum_{v\in S^C}\pi_v\frac{1}{n}P(v,S)\\
&=\frac 1n (Q(S,S^C)+Q(S^C,S))\\
&=\frac{2Q(S,S^C)}{n}\\
&\geq \frac{2(1-\lambda)\pi(S)\pi(S^C)}{n}.
\end{align*}

In the above we used detailed balance to obtain $Q(S^C,S)=Q(S,S^C)$,
and in the last inequality, we used $Q(S,S^C)\geq (1-\lambda)\pi(S)\pi(S^C)$ which comes from \cref{lem:EML}.
Furthermore, since 
$\max\{\pi(S),\pi(S^C)\}\geq 1/2$,
\[
\pi(S)\pi(S^C)=\min\{\pi(S),\pi(S^C)\}\max\{\pi(S),\pi(S^C)\}\geq \min\{\pi(S),\pi(S^C)\}/2.
\]
Thus
\[
 \frac{\E[|\pi(B(1))-\pi(B(0))|\,\mid \!B(0)=S]}{\min\{\pi(S),\pi(S^C)\}}
\geq  \frac{(1-\lambda)\min\{\pi(S),\pi(S^C)\}}{n\min\{\pi(S),\pi(S^C)\}}=\frac{(1-\lambda)}{n}
\]
holds for any $S\subseteq V$ if $S\neq \emptyset, V$. 

It follows that $\Psi \ge \pi_{\min} (1-\lambda)/n$. 
Using \eqref{more-psi},  we obtain the claim.
\end{proof}

Recall the definition of  $\tau^{\mathrm{DIV}}_{\mathrm{extr}}(0)$  as given in \eqref{taudiv}.
We next give an estimate in terms of $\tau_{\mathrm{cons}}^{\mathrm{PULL}}$, of the time $\tau^{\mathrm{DIV}}_{\mathrm{extr}}(0)$  at which one of the extreme opinions disappears in  discrete incremental voting.
\begin{lemma}
\label{lem:extreme_disapper_exp}
Suppose $\Pr\left[\tau_{\mathrm{cons}}^{\mathrm{PULL}}>T_p\sqrt{\min\{\pi(B(0)),\pi(B(0)^C)\}}\right]\leq 1/2$ holds for some $T_p$.
Then, we have $\Pr\left[\tau^{\mathrm{DIV}}_{\mathrm{extr}}(0)>T_p\sqrt{\min\{\pi(A_s(0)),\pi(A_\ell(0))\}}\right]\leq 1/2$.
%
%
\end{lemma}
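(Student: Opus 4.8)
The plan is to couple the DIV process (with initial extreme opinions $s$ and $\ell$) with a single run of two-opinion pull voting, in such a way that DIV has eliminated one of its two extreme opinions no later than the pull voting reaches consensus; the stated bound then follows by instantiating the hypothesis appropriately.

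First I would reduce to the case $\pi(A_s(0))\le\pi(A_\ell(0))$ using the symmetry between $s$ and $\ell$, so that $\min\{\pi(A_s(0)),\pi(A_\ell(0))\}=\pi(A_s(0))\le 1/2$; if $s=\ell$ then $\tau^{\mathrm{DIV}}_{\mathrm{extr}}(0)=0$ and there is nothing to prove, so I assume $s<\ell$, hence $\emptyset\neq A_s(0)\neq V$. Next I would set up two-opinion pull voting with opinions $\{1,2\}$, writing $B(t)$ for the set of vertices holding opinion $1$, started from $B(0)=A_s(0)$, and drive both processes by a common sequence of pairs $(v,w)$ sampled as in the asynchronous vertex process \eqref{avtrans}: at each step DIV updates $X_v$ via \eqref{Tab}, while in the pull voting $v$ adopts the current opinion of $w$.

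The heart of the proof is the invariant
\[
A_s(t)\subseteq B(t)\quad\text{and}\quad A_\ell(t)\subseteq B(t)^{C}\qquad\text{for all }t\ge 0,
\]
which holds at $t=0$ since $B(0)=A_s(0)$ and $A_\ell(0)\cap A_s(0)=\emptyset$, and which I would show is preserved by each coupled step. The mechanism is that opinions move by at most $\pm1$ per step, and no opinion below $s$ (resp.\ above $\ell$) ever appears, so in DIV a vertex can enter $A_s$ only from $A_{s+1}$ and only by picking a neighbour currently in $A_s$, and can leave $A_s$ only by picking a neighbour outside $A_s$; thus $A_s$ spreads no faster than opinion $1$ does in the pull voting, and since the two sets start equal, $A_s(t)\subseteq B(t)$ persists. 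Symmetrically $A_\ell$ can be entered only from $A_{\ell-1}$ and only by picking a neighbour in $A_\ell$, and left only by picking a neighbour outside $A_\ell$, so $A_\ell$ spreads no faster than opinion $2$, and from $A_\ell(0)\subseteq B(0)^C$ we get $A_\ell(t)\subseteq B(t)^C$. Concretely I would run through the step $t\to t+1$ in the cases $v\in A_s$, $v\in A_{s+1}$, $v\in A_\ell$, $v\in A_{\ell-1}$, and $v$ in an intermediate opinion class, each time applying the inductive invariant to the picked neighbour $w$ to decide whether $v$ joins or leaves $B$ (for instance, if $v\in A_{s+1}$ moves to opinion $s$ then necessarily $X_w=s$, so $w\in A_s\subseteq B(t)$, and $v$ also joins $B$).

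Given the invariant, at the consensus time $\tau_{\mathrm{cons}}^{\mathrm{PULL}}$ we are in one of two cases: $B=\emptyset$, which forces $A_s=\emptyset$, or $B=V$, which forces $A_\ell=\emptyset$; in either case one extreme DIV opinion has disappeared, so under the coupling $\tau^{\mathrm{DIV}}_{\mathrm{extr}}(0)\le\tau_{\mathrm{cons}}^{\mathrm{PULL}}$, and hence $\Pr[\tau^{\mathrm{DIV}}_{\mathrm{extr}}(0)>T]\le\Pr[\tau_{\mathrm{cons}}^{\mathrm{PULL}}>T]$ for every $T$. Applying the hypothesis to the pull-voting chain initialized at $B(0)=A_s(0)$, for which $\min\{\pi(B(0)),\pi(B(0)^C)\}=\pi(A_s(0))=\min\{\pi(A_s(0)),\pi(A_\ell(0))\}$, then yields $\Pr\left[\tau_{\mathrm{cons}}^{\mathrm{PULL}}>T_p\sqrt{\min\{\pi(A_s(0)),\pi(A_\ell(0))\}}\right]\le 1/2$, which is the claim. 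I expect the one genuinely delicate point to be the case analysis for the invariant: one must use that the DIV update into an extreme class is strictly more restrictive than the plain pull-voting update, since that asymmetry is exactly the monotonicity making the double containment $A_s(t)\subseteq B(t)$ and $A_\ell(t)\subseteq B(t)^C$ hold simultaneously with a single coupled pull-voting chain.
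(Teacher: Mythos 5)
Your proposal is correct and follows essentially the same route as the paper: couple DIV with a single two-opinion pull-voting chain started from the extreme class of smaller stationary measure (equivalently, the paper's case split on whether $\pi(A_s(0))\le\pi(A_\ell(0))$), use the double containment $A_s(t)\subseteq B(t)$, $A_\ell(t)\subseteq B(t)^C$ to get $\tau^{\mathrm{DIV}}_{\mathrm{extr}}(0)\le\tau^{\mathrm{PULL}}_{\mathrm{cons}}$ under the coupling, and then invoke the hypothesis with $\min\{\pi(B(0)),\pi(B(0)^C)\}=\min\{\pi(A_s(0)),\pi(A_\ell(0))\}$. The only difference is that the paper obtains the containment invariant as a black box (Lemma \ref{lem:domination}, a rephrasing of Lemma \ref{T2} cited from prior work), whereas you sketch its case-by-case proof directly; your sketch of that coupling is sound.
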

We will choose $T_p=\frac{64n}{\sqrt{2}(1-\lambda)\pi_{\min}}$ as in \cref{lem:pullupper}.
To show \cref{lem:extreme_disapper_exp}, we use the next lemma, which is a rephrasing of Lemma \ref{T2}.

\begin{lemma}
\label{lem:domination}
Let $B(0)$ be the set of vertices initially holding opinion 1 in two-opinion pull voting with opinions $\{1,2\}$.
We have the following:
\begin{enumerate}[(i)]
    \item Suppose $A_s(0)= B(0)$. Then, there is a coupling such that both $A_s(t)\subseteq B(t)$ and $A_\ell(t)\subseteq V\setminus B(t)$ hold for any $t\geq 0$.
    \item Suppose $A_\ell(0)= B(0)$. Then, there is a coupling such that both $A_\ell(t)\subseteq B(t)$ and $A_s(t)\subseteq V\setminus B(t)$ hold for any $t\geq 0$.
\end{enumerate}
\end{lemma}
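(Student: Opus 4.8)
The plan is to prove both parts via the \emph{identity coupling}: run DIV (opinions in $[k]$) and two-opinion pull voting (opinions $\{1,2\}$) on $G$ simultaneously, and at each step let the \emph{same} ordered pair $(v,w)$ be selected — the event ``$v$ chooses $w$'' — in both processes. The key structural feature to exploit is that a single step changes the state of at most one vertex, namely the chooser $v$: in pull voting $v$ adopts $X_w$, in DIV $v$ moves one step towards $X_w$, and in both processes the neighbour $w$ keeps its opinion. Consequently, verifying the claimed inclusions is just a matter of tracking the membership of $v$, and I would argue by induction on $t$. The base case of (i) is immediate since $A_s(0)=B(0)$ and $A_s(0)\cap A_\ell(0)=\emptyset$; part (ii) is the mirror image.

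For the inductive step of (i), I would assume $A_s(t)\subseteq B(t)$ and $A_\ell(t)\subseteq V\setminus B(t)$ and examine $v$. A vertex lies in $A_s(t+1)$ only in one of two ways: it was already in $A_s(t)$ and its chosen neighbour satisfied $X_w=s$ (so no move), or it was in $A_{s+1}(t)$ and $X_w=s$ (so it steps down to $s$); here one uses that the global minimum opinion never drops below $s$, so $X_w=s$ is the same as $w\in A_s(t)$. In either case the ``witness'' $w$ lies in $A_s(t)\subseteq B(t)$, and in pull voting $w\in B(t)$ forces $v$ to have opinion $1$ at time $t+1$, i.e.\ $v\in B(t+1)$; since no other vertex changes, $A_s(t+1)\subseteq B(t+1)$. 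The inclusion $A_\ell(t+1)\subseteq V\setminus B(t+1)$ follows by the symmetric argument: a vertex enters or stays in $A_\ell$ only if its chosen neighbour $w$ lies in $A_\ell(t)\subseteq V\setminus B(t)$, hence $w\notin B(t)$, which forces $v$ to opinion $2$ and so $v\notin B(t+1)$ — here one uses that the global maximum never exceeds $\ell$, so that for $v\in A_\ell(t)$ a neighbour $w\notin A_\ell(t)$ has $X_w<\ell$ and $v$ genuinely leaves $A_\ell$.

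Finally I would record the payoff used in the next lemma: under this coupling $B(t)=\emptyset$ forces $A_s(t)=\emptyset$ and $B(t)=V$ forces $A_\ell(t)=\emptyset$, so $\tau^{\mathrm{DIV}}_{\mathrm{extr}}(0)\le \tau_{\mathrm{cons}}^{\mathrm{PULL}}$ once $B(0)$ is chosen as whichever of $A_s(0),A_\ell(0)$ has the smaller stationary measure. The argument is entirely elementary; the one place that demands care — and the only respect in which DIV is not a monotone erosion of $A_s$ — is that vertices may \emph{re-enter} $A_s$ (or $A_\ell$), so the induction must be carried for the whole sets $A_s(t),A_\ell(t)$ rather than along individual vertex trajectories, and each re-entry must be certified by a neighbour already on the correct side of the cut $B(t)$. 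I expect this bookkeeping of re-entries to be the main (though modest) obstacle.
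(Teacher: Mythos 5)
Your proof is correct, and it is essentially the argument the paper intends: the paper gives no proof of this lemma (it is stated as a rephrasing of Lemma \ref{T2}, cited from \cite{DIVFULL}), and the underlying argument there is exactly your identity coupling — select the same pair $(v,w)$ in both processes and induct on $t$, using that only $v$ changes, that opinions stay in $[s,\ell]$, and that (re-)entry of $v$ into $A_s$ (resp.\ $A_\ell$) is witnessed by a neighbour $w\in A_s(t)\subseteq B(t)$ (resp.\ $w\in A_\ell(t)\subseteq V\setminus B(t)$), which forces $v$'s pull-voting opinion accordingly. Your closing observation that this yields $\tau^{\mathrm{DIV}}_{\mathrm{extr}}(0)\le \tau^{\mathrm{PULL}}_{\mathrm{cons}}$ under the coupling is precisely how the lemma is used in Lemma \ref{lem:extreme_disapper_exp}.
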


\begin{proof}[{\bf Proof of \cref{lem:extreme_disapper_exp}}]
    First, consider the case of $\pi(A_s(0))\leq \pi(A_\ell(0))$.
    In this case, \\
     since $\pi(A_s(0))\leq \pi(A_\ell(0))\leq 1-\pi(A_s(0))$, $\min\{\pi(A_s(0)),\pi(A_s(0)^C)\}=\pi(A_s(0))$.
    Let
    \begin{align*}
        T_s=T_p\sqrt{\min\{\pi(A_s(0)),\pi(A_s(0)^C)\}}=T_p\sqrt{\pi(A_s(0))}=T_p\sqrt{\min\{\pi(A_s(0)),\pi(A_\ell(0))\}}.
    \end{align*}
    Then, applying \cref{lem:domination} with $B(0)=A_s(0)$,
    \begin{align*}
        \Pr[\tau^{\mathrm{DIV}}_{\mathrm{extr}}(0)>T_p\sqrt{\min\{\pi(A_s(0)),\pi(A_\ell(0))\}}]
        &= \Pr[\tau^{\mathrm{DIV}}_{\mathrm{extr}}(0)>T_s]\\
        &= \Pr[A_s(T_s)\neq \emptyset \textrm{ and } A_\ell(T_s) \neq \emptyset]\\
        &\leq \Pr[B(T_s)\neq \emptyset  \textrm{ and } V\setminus B(T_s) \neq \emptyset ]\\
        &\leq \Pr[\tau^{\mathrm{PULL}}_{\mathrm{cons}}> T_s]\\
        &\leq 1/2.
    \end{align*}

    Second, suppose $\pi(A_s(0))>\pi(A_\ell(0))$.
    In this case, $\pi(A_\ell(0))< \pi(A_s(0))\leq 1-\pi(A_\ell(0))$.
    Let
    \begin{align*}
        T_\ell=T_p\sqrt{\min\{\pi(A_\ell(0)),\pi(A_\ell(0)^C)\}}=T_p\sqrt{\pi(A_\ell(0))}=T_p\sqrt{\min\{\pi(A_s(0)),\pi(A_\ell(0))\}}.
    \end{align*}
    Then, applying \cref{lem:domination} with $B(0)=A_\ell(0)$,
    \begin{align*}
        \Pr[\tau^{\mathrm{DIV}}_{\mathrm{extr}}(0)>T_p\sqrt{\min\{\pi(A_s(0)),\pi(A_\ell(0))\}}]
        &= \Pr[\tau^{\mathrm{DIV}}_{\mathrm{extr}}(0)>T_\ell] \\
        &= \Pr[A_\ell(T_\ell)\neq \emptyset \textrm{ and } A_s(T_\ell) \neq \emptyset]\\
        &\leq \Pr[B(T_\ell)\neq \emptyset  \textrm{ and } V\setminus B(T_\ell) \neq \emptyset ]\\
        &\leq \Pr[\tau^{\mathrm{PULL}}_{\mathrm{cons}}> T_\ell]\\
        &\leq 1/2
    \end{align*}
    holds. 
    Thus, we obtain the claim.
\end{proof}

\subsection{Proof of the main theorem, Theorem \ref{Expandy}}
\label{sec:Proof of main thm}
In this section we prove \cref{Expandy}  by combining \cref{lem:extremal_expander,lem:extreme_disapper_exp} in Lemma \ref{lem:general}.
The parameters $T_1(\epsilon), T_2(\epsilon)$, and $T_p(\epsilon)$ from those lemmas are as follows:
\begin{flalign}\label{Tvals}\!\!
    T_1(\epsilon)=\left \lceil 2n\log\left(\frac{1}{2\epsilon^2}\right)\right \rceil,
    T_2(\epsilon)=\left\lceil \frac{2n}{\epsilon}\log \left(\frac{1}{2\epsilon^2}\right)\right\rceil,
     T_p=\left\lceil\frac{64n}{\sqrt{2}(1-\lambda)\pi_{\min}}\right\rceil.
\end{flalign}
For convenience, let $T_p(\epsilon):=T_p\sqrt{\epsilon}$.

\begin{lemma}
\label{lem:general}
We have the following:
\begin{enumerate}[(i)]
    \item \label{item:disapper_general_ge4}
    Suppose $\ell\geq s+3$.
    Then, for any $\epsilon_1\geq 4\lambda^2$,
    $\Pr\left[\tau^{\mathrm{DIV}}_{\mathrm{extr}}(0)>T_1(\epsilon_1)+T_p(\epsilon_1)\right]\leq 3/4$
    and
    $\E\left[\tau^{\mathrm{DIV}}_{\mathrm{extr}}(0)\right]
    \leq 4(T_1(\epsilon_1)+T_p(\epsilon_1))$
    hold.
    \item \label{item:disapper_general_eq3}
    Suppose $\ell=s+2$.
    Then, for any $\epsilon_2\geq 2\lambda$,
    $\Pr\left[\tau^{\mathrm{DIV}}_{\mathrm{extr}}(0)>T_2(\epsilon_2)+T_p(\epsilon_2)\right]\leq 3/4$  and
    $\E\left[\tau^{\mathrm{DIV}}_{\mathrm{extr}}(0)\right]
    \leq 4(T_2(\epsilon_2)+T_p(\epsilon_2))$
    hold.
\end{enumerate}
\end{lemma}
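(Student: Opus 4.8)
The plan is to combine the two ingredients already proved: Lemma~\ref{lem:extremal_expander}, which says that by time $T_1(\epsilon_1)$ (resp.\ $T_2(\epsilon_2)$) one of the extreme opinions has stationary measure at most $\epsilon_1$ (resp.\ $\epsilon_2$) with probability at least $3/4$ (taking $\eta=1/4$ and absorbing the $\eta$ into the logarithm, which is why the $T_i(\epsilon)$ in \eqref{Tvals} carry a $1/(2\epsilon^2)$ rather than $1/(4\epsilon^2\eta)$), and Lemma~\ref{lem:extreme_disapper_exp} (via Lemma~\ref{lem:pullupper}), which says that once an extreme opinion has measure $\le\epsilon$, it disappears within a further $T_p\sqrt{\epsilon}=T_p(\epsilon)$ steps with probability at least $1/2$. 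First I would treat a single ``attempt'': run for $T_1(\epsilon_1)$ steps to drive $\min\{\pi(A_s),\pi(A_\ell)\}\le\epsilon_1$ (success probability $\ge 3/4$), then condition on that event and apply the pull-voting coupling from the state at time $\tau^{\mathrm{DIV}}_{\mathrm{extr}}(\epsilon_1)$ for $T_p(\epsilon_1)$ further steps to kill that extreme opinion (conditional success probability $\ge 1/2$). By a union bound the attempt fails with probability at most $1/4+1/2=3/4$, giving the first displayed inequality; the case $\ell=s+2$ is identical with $T_2,\epsilon_2$ in place of $T_1,\epsilon_1$.

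For the expectation bound I would use the standard restart argument: the process is Markovian, so if an attempt of length $T_1(\epsilon_1)+T_p(\epsilon_1)$ fails we are (in the worst case) back in a configuration with the same number of opinions and can repeat, each attempt succeeding independently with probability $\ge 1/4$ regardless of history. Hence the number of attempts is stochastically dominated by a geometric random variable with mean $4$, and $\E[\tau^{\mathrm{DIV}}_{\mathrm{extr}}(0)]\le 4(T_1(\epsilon_1)+T_p(\epsilon_1))$ (resp.\ with $T_2$). One must be slightly careful that ``failure'' leaves us in a state to which the lemmas still apply --- i.e.\ that no extra opinions have been created and $s,\ell$ have not moved --- but since $\tau^{\mathrm{DIV}}_{\mathrm{extr}}(0)$ only records the \emph{first} disappearance of an extreme opinion, and the hypotheses of Lemma~\ref{lem:extremal_expander} and Lemma~\ref{lem:pullupper} depend only on $n$, $\lambda$, $\pi_{\min}$ and the identities of $A_s,A_\ell$, the same bounds apply verbatim at every restart; the monotonicity $A_s,A_\ell\subseteq$ their initial supports (from the coupling in Lemma~\ref{lem:domination}) keeps $s$ and $\ell$ from changing before $\tau^{\mathrm{DIV}}_{\mathrm{extr}}(0)$.

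The main subtlety --- really the only place care is needed --- is the conditioning when we splice the second phase onto the first: Lemma~\ref{lem:extreme_disapper_exp}/Lemma~\ref{lem:pullupper} are stated for a fixed initial configuration, whereas here the configuration at the (random) stopping time $\tau^{\mathrm{DIV}}_{\mathrm{extr}}(\epsilon_1)$ is itself random. I would handle this by the strong Markov property: conditioned on $\{\tau^{\mathrm{DIV}}_{\mathrm{extr}}(\epsilon_1)\le T_1(\epsilon_1)\}$ and on the configuration $X(\tau^{\mathrm{DIV}}_{\mathrm{extr}}(\epsilon_1))$, the future evolution is a fresh DIV process whose extreme-opinion measure is at most $\epsilon_1$, so Lemma~\ref{lem:extreme_disapper_exp} applies with $\sqrt{\min\{\pi(A_s),\pi(A_\ell)\}}\le\sqrt{\epsilon_1}$ and yields disappearance within $T_p\sqrt{\epsilon_1}=T_p(\epsilon_1)$ steps with probability $\ge 1/2$. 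Summing the two phase lengths gives the stated $T_1(\epsilon_1)+T_p(\epsilon_1)$, and taking expectations over the geometric number of attempts gives the factor $4$. Everything else is a routine union bound and the elementary fact $\E[\text{Geom}(p)]=1/p$.
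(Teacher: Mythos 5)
Your proposal follows essentially the same route as the paper: one attempt consists of Lemma~\ref{lem:extremal_expander} to drive an extreme opinion below $\epsilon$ within $T_1(\epsilon_1)$ (resp.\ $T_2(\epsilon_2)$) steps, then Lemma~\ref{lem:extreme_disapper_exp} (via Lemma~\ref{lem:pullupper}) from the stopped configuration to kill it within a further $T_p(\epsilon)=T_p\sqrt{\epsilon}$ steps, and a geometric restart argument gives the factor $4$ in the expectation. One small correction to your accounting: the values $T_1,T_2$ in \eqref{Tvals} correspond to $\eta=1/2$ in Lemma~\ref{lem:extremal_expander} (since $\log\frac{1}{4\epsilon^2\cdot(1/2)}=\log\frac{1}{2\epsilon^2}$), not to $\eta=1/4$ as you claim; with the stated $T_1(\epsilon_1)$ the first phase succeeds only with probability $\ge 1/2$, so your union-bound tally $1/4+1/2$ does not apply, and instead one bounds the per-attempt success probability by the product $(1/2)\cdot(1/2)=1/4$, which is exactly the paper's argument and still yields $\Pr[\tau^{\mathrm{DIV}}_{\mathrm{extr}}(0)>T_i(\epsilon_i)+T_p(\epsilon_i)]\le 3/4$ and $\E[\tau^{\mathrm{DIV}}_{\mathrm{extr}}(0)]\le 4(T_i(\epsilon_i)+T_p(\epsilon_i))$. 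Also, the reason $s$ and $\ell$ remain the extremes at a restart is simply that no opinion outside $[s,\ell]$ can ever be created and neither extreme has yet vanished before $\tau^{\mathrm{DIV}}_{\mathrm{extr}}(0)$; the containment $A_s(t)\subseteq A_s(0)$ you invoke from Lemma~\ref{lem:domination} is not what that lemma gives (the sets $A_s(t)$, $A_\ell(t)$ can grow), though this does not affect the argument.
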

\begin{proof}[Proof of (\ref{item:disapper_general_ge4}) of \cref{lem:general}]
Applying \cref{lem:extremal_expander}(i) with $\eta=1/2$,
it holds that within $t\leq T_1(\epsilon_1)$ steps,
$\min\{\pi(A_s(t)),\pi(A_\ell(t))\}\leq \epsilon_1$ with a probability at least $1/2$.
Furthermore, starting with an initial condition where $\min\{\pi(A_s(0)),\pi(A_\ell(0))\}\leq \epsilon_1$,
\cref{lem:pullupper,lem:extreme_disapper_exp} imply that
within $t\leq T_p(\epsilon_1)$ steps, $\min\{\pi(A_s(t)),\pi(A_\ell(t))\}=0$ with a probability at least $1/2$.
Combining these facts, we have
\[
    \Pr\left[\tau^{\mathrm{DIV}}_{\mathrm{extr}}(0)\leq T_1(\epsilon_1)+T_p(\epsilon_1)\right]\geq \left(\frac{1}{2}\right)^2=\frac{1}{4}.
\]
Repeating these arguments implies that $\Pr\left[\tau^{\mathrm{DIV}}_{\mathrm{extr}}(0)> h\,(T_1(\epsilon_1)+T_p(\epsilon_1))\right]\leq (3/4)^h$.
Hence,
\begin{align*}
    \E\left[\tau^{\mathrm{DIV}}_{\mathrm{extr}}(0)\right]
    &=\sum_{t=0}^\infty \Pr\left[\tau^{\mathrm{DIV}}_{\mathrm{extr}}(0)>t\right]\\
    &=\sum_{h=0}^\infty\sum_{j=0}^{T_1(\epsilon_1)+T_p(\epsilon_1)-1}\Pr\left[\tau^{\mathrm{DIV}}_{\mathrm{extr}}(0)>h(T_1(\epsilon_1)+T_p(\epsilon_1))+j\right]\\
    &\leq \sum_{h=0}^\infty (T_1(\epsilon_1)+T_p(\epsilon_1)) \left(\frac{3}{4}\right)^h\\
    &=4(T_1(\epsilon_1)+T_p(\epsilon_1)).
\end{align*}
\end{proof}

\begin{proof}[Proof of (\ref{item:disapper_general_eq3}) of \cref{lem:general}]
Applying \cref{lem:extremal_expander}(ii) with $\eta=1/2$,
it holds that within $t\leq T_2(\epsilon_2)$ steps,
$\min\{\pi(A_s(t)),\pi(A_\ell(t))\}\leq \epsilon_2$ with a probability at least $1/2$.
Applying \cref{lem:pullupper,lem:extreme_disapper_exp}, with the initial condition of $\min\{\pi(A_{s}(0)),\pi(A_{\ell}(0))\}=\epsilon_2$,
it holds that within $t\leq T_p(\epsilon_2)$ steps, $\min\{\pi(A_{s}(t)),\pi(A_{\ell}(t))\}=0$ with a probability at least $1/2$.
These facts imply that
\[
    \Pr\left[\tau^{\mathrm{DIV}}_{\mathrm{extr}}(0)\leq T_2(\epsilon_2)+T_p(\epsilon_2)\right]\geq \left(\frac{1}{2}\right)^2=\frac{1}{4}.
\]
Hence, we obtain $\Pr\left[\tau^{\mathrm{DIV}}_{\mathrm{extr}}(0)> h( T_2(\epsilon_2)+T_p(\epsilon_2))\right]\leq (3/4)^h$
for any integer $h>0$ and
\begin{align*}
    \E\left[\tau^{\mathrm{DIV}}_{\mathrm{extr}}(0)\right]
    &=\sum_{t=0}^\infty \Pr\left[\tau^{\mathrm{DIV}}_{\mathrm{extr}}(0)>t\right]\\
    &=\sum_{h=0}^\infty\sum_{j=0}^{T_2(\epsilon_2)+T_p(\epsilon_2)-1}\Pr\left[\tau^{\mathrm{DIV}}_{\mathrm{extr}}(0)>h(T_2(\epsilon_2)+T_p(\epsilon_2))+j\right]\\
    &\leq \sum_{h=0}^\infty (T_2(\epsilon_2)+T_p(\epsilon_2)) \left(\frac{3}{4}\right)^h\\
    &=4(T_2(\epsilon_2)+T_p(\epsilon_2)).
\end{align*}
\end{proof}

\ignore{
\begin{proof}[Proof of (\ref{super-conc}) of \cref{lem:general}]
{\blue Let $J=1,2$ and  $T=T_J(\epsilon_J)+T_p(\epsilon_J)$. As $\Pr(\tau^{\mathrm{DIV}}_{\mathrm{extr}}(0)\le T)\ge 1/4$, restarting the process every $T$ steps,  the probability of success is independent. 
Repeating this argument $h$ times, the probability that   neither the minimum or maximum opinion has disappeared
within $hT$ steps is at most $4(3/4)^h$. As $\E[T] \le 4T$, the claim follows.}
\end{proof}

{\violet  
From \cref{lem:general},
\begin{align}
    \E[\tau]\leq 4(k-3)(T_1(\epsilon_1)+T_p(\epsilon_1))+4(T_2(\epsilon_2)+T_p(\epsilon_2)),
    \label{eq:expected value bound}
\end{align}
{\red which by Lemma \ref{lem:extremal_expander}} holds for any $\epsilon_1\geq 4\lambda^2$ and $\epsilon_2\geq 2\lambda$.
By Lemma \cref{lem:general}(\ref{item:disapper_general_eq3})  we have that
\[
\Pr(\tau \ge h \E[\tau]) =O\brac{ k \, (3/4)^{h/4}}.
\]
}
}

\begin{proof}[{\bf Proof of \cref{thm:expander_gen}}]
Write $\tau$ for the first time $t\geq 0$ when there are at most two consecutive opinions $i$ and $i+1$ remaining.
Henceforth, we abbreviate $\max_{x}\E[\tau|X(0)=x]$ (the expected value of $\tau$ from the worst initial configuration) as $\E[\tau]$ for convenience.
From \cref{lem:general},
\begin{align}
    \E[\tau]\leq 4(k-3)(T_1(\epsilon_1)+T_p(\epsilon_1))+4(T_2(\epsilon_2)+T_p(\epsilon_2)),
    \label{eq:expected value bound}
\end{align}
which by Lemma \ref{lem:extremal_expander} holds for any $\epsilon_1\geq 4\lambda^2$ and $\epsilon_2\geq 2\lambda$.

Now, set
$\epsilon_1=\max\{4\lambda^2,n^{-2}\}\geq 4\lambda^2$ and
$\epsilon_2=\max\{2\lambda,n^{-2/3}\}\geq 2\lambda$.
Assumptions of $\pi_{\min}=\Theta(1/n)$ and $\lambda =o(1)$ imply that $\frac{n}{(1-\lambda)\pi_{\min}}=O(n^2)$.
Hence, from \cref{Tvals} there is a sufficiently large constant $C>0$ such that
\begin{align*}
    &kT_1(\epsilon_1)\leq Cnk\log n,
    &&kT_p(\epsilon_1)\leq Ckn^2\max\{\lambda, 1/n\}\leq C(n^2\lambda k+nk), \\
    &T_2(\epsilon_2)\leq Cn^{5/3}\log n,
    &&T_p(\epsilon_2)\leq Cn^2\max\{\sqrt{\lambda}, 1/n^{1/3}\}\leq C(n^2\sqrt{\lambda}+n^{5/3})
\end{align*}
hold. Hence, noting the assumptions that $k=o(n/\log n)$ and $\lambda k=o(1)$, it follows that all of $kT_1(\epsilon_1)$, $kT_p(\epsilon_1)$, $T_2(\epsilon_2)$, and $T_p(\epsilon_2)$ are $o(n^2)$. 
Putting this together gives the following upper bound on \Cref{eq:expected value bound},
\begin{align}
    \E[\tau]=O(nk \log n +n^2 \lambda k +n^{5/3} \log n +n^2 \sqrt{\lambda}).
    \label{eq:tau upper bound}
\end{align}

\ignore{
(We can delete this blue part)
Thus provided $k=o(n \log n)$ and $\lambda k =o(1)$,
\Cref{eq:tau upper bound} implies that $\E[\tau]=o(n^2)$., i.e., there is some function $\epsilon(n)=o(1)$ such that
$\E[\tau]\leq n^2\epsilon(n)$ holds. 

From the Markov inequality, we obtain
\begin{align}
    \Pr\left[\tau< n^2\sqrt{\epsilon(n)}\right]
    &=1-\Pr\left[\tau\geq  n^2\sqrt{\epsilon(n)}\right]
    \geq 1-\frac{\E[\tau]}{n^2\sqrt{\epsilon(n)}}
    \geq 1-\sqrt{\epsilon(n)}
    =1-o(1) .
    \label{eq:o(n^2) bound}
\end{align}
}

Provided $k=o(n \log n)$ and $\lambda k =o(1)$, from \eqref{eq:expected value bound}
we obtain $\E[\tau]=o(n^2)$, i.e., there is some function $\epsilon(n)=o(1)$ such that
$\E[\tau]\leq n^2\epsilon(n)$ holds. 

The Markov inequality, $\Pr[\tau>\mathrm{e}\E[\tau]]\leq 1/\mathrm{e}$ holds for any initial configuration.
By repeating this process independently $h$ times, we obtain the following consequence:
\begin{align}\label{eq:exponential decay}
    \Pr[\tau>h\mathrm{e}\E[\tau]]\leq \mathrm{e}^{-h}.
\end{align}
Thus, taking $h= \left\lceil 1/\sqrt{\epsilon(n)} \right\rceil=\omega(1)$, we obtain $\Pr[\tau\leq h\mathrm{e}\E[\tau]]\geq 1-\mathrm{e}^{-\omega(1)}$ for $h\mathrm{e} \E[\tau]\leq \mathrm{e} n^2\epsilon(n)\left(\frac{1}{\sqrt{\epsilon(n)}}+1\right)=o(n^2)$.

In other words, within $o(n^2)$ steps,
there are at most two consecutive opinions remaining w.h.p.
This completes the proof of \cref{thm:expander_gen}.
\end{proof}

\newpage


\begin{thebibliography}{99}
\bibitem{AD}  Mohammed Amin Abdullah and Moez Draief.
Global majority consensus by local majority
polling on graphs of a given degree sequence. Discrete Applied Mathematics, 180 1--10, 2015.



\bibitem{Becchetti} Luca Becchetti, Andrea Clementi, Emanuele Natale, Francesco Pasquale, Riccardo Silvestri, and
Luca Trevisan. Simple dynamics for plurality consensus.
SPAA 2014,  247–256, New York, NY,
USA, 2014. ACM.

{%
\bibitem{Bec2} Luca Becchetti, Andrea E. F. Clementi, Emanuele Natale, Francesco Pasquale,
Riccardo Silvestri: Plurality Consensus in the Gossip Model. SODA 2015:
371-390

\bibitem{Bec3} Luca Becchetti, Andrea E. F. Clementi, Emanuele Natale, Francesco Pasquale,
Luca Trevisan: Stabilizing Consensus with Many Opinions. SODA 2016: 620-635
}

\bibitem{Petra2} Petra Berenbrink, Tom Friedetzky, Dominik Kaaser, and Peter Kling. 
Tight \& Simple Load Balancing. 
In 2019 IEEE International Parallel and Distributed Processing Symposium (IPDPS). 718--26.

\bibitem{Petra} Petra Berenbrink, George Giakkoupis, and Peter Kling. 
Tight bounds for coalescing-branching random walks on regular graphs. 
SODA 2018, 1715–1733 Society for Industrial and Applied Mathematics, Philadelphia, PA, USA, 2018.





\bibitem{BMPS04}
Siddhartha Brahma, Sandeep Macharla, Sudebkumar Prasant Pal, and Sudhir Kumar Singh. 
{Fair leader election by randomized voting.}
{ICDCIT 2004},
 22--31, (2004).

\bibitem{ACO}
Amin Coja-Oghlan.
On the Laplacian Eigenvalues of $G_{n,p}$. 
Combinatorics, Probability and Computing, 16, pages 923--946 (2007).

\bibitem{CGJ18}
Nicholas Cook, Larry Goldstein, and Tobias Johnson.
Size biased couplings and the spectral gap for random regular graphs.
The Annals of Probability 46 (1): 72–125 (2018).


\bibitem{CER} 
Colin Cooper, Robert Elsässer, and Tomasz Radzik. 
The power of two choices in distributed voting. ICALP 2014,
435–446, Berlin, Heidelberg, 2014. Springer Berlin Heidelberg.

\bibitem{CERReigen}
Colin Cooper, Robert  Els\"asser, Tomasz Radzik, Nicol\'as Rivera, and Takeharu Shiraga. 
Fast consensus for voting on general expander
graphs. DISC 2015 248-262 (2015).


\bibitem{CRRS} 
Colin Cooper, Tomasz Radzik, Nicol\'as Rivera, and Takeharu Shiraga. 
Fast plurality consensus in regular expanders.  DISC 2017 1–13:16 (2017).


\bibitem{DIVFULL} 
Colin Cooper, Tomasz Radzik, and Takeharu Shiraga. 
Discrete incremental voting. 
OPODIS 2023 10:1-10:22, and  arXiv Technical Report 2305.15632, (2023).


\bibitem{CR16} 
Colin Cooper and Nicol\'as Rivera. 
The linear voting model. 
In 43rd International Colloquium on Automata, Languages, and Programming (ICALP 2016), pages 144:1–144:12 (2016).

\bibitem{Doerr}
Benjamin Doerr, Leslie Ann Goldberg, Lorenz Minder, Thomas Sauerwald, and Christian Scheideler.
Stabilizing consensus with the power of two choices. 
In Proc. of the 23rd Ann. ACM Symp. on Parallelism in Algorithms and Architectures (SPAA’11), 149–158. ACM, 2011.

\bibitem{Ghaffari}
Mohsen Ghaffari and Johannes Lengler. 
Nearly-tight analysis for 2-choice and 3-majority consensus dynamics. 
In Proceedings of the 2018 ACM Symposium on Principles of Distributed Computing, PODC ’18, 305–313, New York, NY, USA, 2018. ACM.

\bibitem{HassinPeleg-InfComp2001}
Yehuda Hassin and David Peleg.
Distributed probabilistic polling and applications to proportionate agreement.
Information \& Computation, 171, pages 248--268, (2001).

\bibitem{Joh89} 
Barry W.~Johnson, 
{\em Design and Analysis of Fault Tolerant Digital Systems},
Addison-Wesley, (1989).



\bibitem{NanNico}
Nan Kang, Nicolas Rivera.
Best-of-Three Voting on Dense Graphs.
In Proceedings of the 31st ACM Symposium on Parallelism in Algorithms and Architectures (SPAA), 115–121, 2019.


\bibitem{LP17} 
David A Levin and Yuval Peres. 
Markov Chains and Mixing Times. 
American Mathematical Society, 2017.

\bibitem{Nakata_etal_1999}
Toshio Nakata, Hiroshi Imahayashi, and Masafumi Yamashita.
Probabilistic local majority voting for the agreement problem on finite graphs.
{ COCOON 1999},
330--338, (1999).

\bibitem{NS}
Nobutaka Shimizu, Takeharu Shiraga.
Phase transitions of Best-of-two and Best-of-three on stochastic block models. 
Random Struct. Algorithms 59(1): 96-140 (2021)



\bibitem{TY19}
Konstantin Tikhomirov and Pierre Youssef.
The spectral gap of dense random regular graphs.
The Annals of Probability, 47 (1): 362–419 (2019).

\end{thebibliography}


\end{document}